\newcommand{\R}{\ensuremath{\mathbb{R}}}
\newcommand{\N}{\ensuremath{\mathbb{N}}}
\newcommand{\bmat}[1]{\begin{bmatrix}#1\end{bmatrix}}
\newcommand{\smat}[1]{\left[\begin{smallmatrix}#1\end{smallmatrix}\right]}
\newtheorem{theorem}{Theorem}
\newtheorem{proposition}{Proposition}
\newtheorem{lem}{Lemma}
\newcommand{\newmarkedassumption}[1]{%
  \newenvironment{#1}
    {\pushQED{$\triangleleft$}\csname inner@#1\endcsname}
    {\popQED\csname endinner@#1\endcsname}%
  \newtheorem{inner@#1}%
}
\begin{document}

\begin{frontmatter}

\title{Event-triggered control of nonlinear systems from data}
\tnotetext[t1]{This publication is part
of the project Digital Twin with project number P18-03 of
the research programme TTW Perspective which is (partly)
financed by the Dutch Research Council (NWO). H.~Chen is supported by the China Scholarship Council.}
\tnotetext[]{Corresponding author: A.~Bisoffi.}

\author[Groningen]{Hailong Chen}\ead{hailong.chen@rug.nl}
\author[Groningen]{Claudio De Persis}\ead{c.de.persis@rug.nl}
\author[Milan]{Andrea Bisoffi}\ead{andrea.bisoffi@polimi.it}
\author[Firenze]{Pietro Tesi}\ead{pietro.tesi@unifi.it}

\address[Groningen]{ENTEG, University of Groningen, 9747 AG Groningen, The Netherlands}                                     
\address[Milan]{Department of Electronics, Information and Bioengineering, Politecnico di Milano, 20133 Milan, Italy}
\address[Firenze]{DINFO, University of Florence, 50139 Firenze, Italy}

\begin{abstract}
In a recent paper \cite{deposte2023}, we introduced a data-based approach to design event-triggered controllers for linear systems directly from data.
Here, we extend the results in \cite{deposte2023} to a class of nonlinear systems.
We provide two data-based designs certified by a (classical) Lyapunov function.
For these two designs, we devise event-triggered policies that rely on the previously found Lyapunov function, have parameters tuned from data, ensure a positive minimum inter-event time, and act based either on the state error or on the library error.
These two different policies, and their respective advantages, are illustrated numerically.
\end{abstract}

\begin{keyword}
Data-based control\sep asymptotic stabilization\sep optimization-based controller synthesis\sep networked control systems\sep application of nonlinear analysis and design 
\end{keyword}

\end{frontmatter}

\section{Introduction}
Within control engineering, effectively managing nonlinear systems is crucial, especially in domains such as aerospace and robotics. Traditional control paradigms, typically time-triggered, can lead to inefficiency in both computation and resource utilization due to their predetermined operational instants. 
Conversely, event-triggered control initiates actions based on specific state conditions, potentially enhancing resource efficiency and system responsiveness.

Despite the advantages of event-triggered control, its application to nonlinear systems via conventional methods presents many challenges, mainly due to the complex dynamics that characterize nonlinear systems. 
Moreover, the design of feedback laws and event-triggering policies, whether through emulation-based approaches \cite{Tabuada07,Heemels2012} or co-design methods \cite{Postoyan2014,marchand2012general}, relies heavily on a dynamical model of the system. 
This dependence often limits the flexibility and applicability of event-triggered control when an accurate model is unavailable or difficult to derive. 
Recently, an alternative approach known as direct data-driven control has been developed to design controllers directly from input-state/output data bypassing the step of system identification. 
While this approach has primarily focused on linear systems \cite{de2019formulas, henk-ddctr-uncer, berberich2020robust-acc}, it has also been extended to nonlinear systems \cite{derotte2023,deptARC2023,martARC2023}. 
Inspired by such developments, there is a growing literature on applying these principles to data-driven event-triggered control see, e.g. \cite{deposte2023, iannelli2024hybrid,deng2024event}. 
By focusing on continuous-time linear time-invariant systems, \cite{deposte2023} derives data-driven versions of various popular event-triggered control policies in the literature \cite{Tabuada07,Heemels2012,Borgers-Heemels-tac14,Girard-tac15} and considers also disturbances acting on the system.
In~\cite{iannelli2024hybrid}, data-driven stabilization of linear time-varying systems is pursued by updating controller gain and Lyapunov function episodically, in an event-triggered fashion; between episodes, however, communication of plant and controller occurs continuously.
Other data-driven event-triggered methods have been developed for dynamic triggering mechanisms \cite{qi-et-al-tie2022} and predictive control \cite{deng2024event}. 
However, these studies predominantly focus on linear systems, whereas here we investigate data-driven event-triggered control strategies tailored for nonlinear systems.

Moreover, event-triggered control \cite{Tabuada07,Heemels2012,Postoyan2014,Liu-Jiang-aut15} often relies on the assumption of input-to-state stability (ISS) with respect to measurement errors. 
This assumption can limit the applicability of event-triggered control in practical scenarios where this stronger notion of stability cannot be assured. 
An ISS-based approach typically involves also knowing or constructing comparison functions that are closely linked to the system model, which are hard to rework when the system dynamics change, especially for nonlinear systems.
The co-design of control law and event-triggered law is achieved in \cite{xing2016event} without an ISS assumption, but the approach guarantees global boundedness of closed-loop signals and not asymptotic stability.  Also \cite{marchand2012general} does not consider ISS, but builds upon the existence of a control Lyapunov function.
On the other hand, our methods obtain quadratic Lyapunov functions from data.

\textbf{Considered framework.}
We aim to develop event-triggered control strategies for unknown input-affine nonlinear systems.
We assume availability of experimental input-state data generated offline by the unknown true system (\emph{ground truth}) and of a library of basis functions including those in the true system.
By adopting an emulation-based approach, we start with the design of a state feedback controller: the first design approach is based on nonlinearity cancellation, and the second one on contractivity. 
As detailed in \cite{derotte2023}, the first approach involves (approximate) nonlinearity cancellation, that is, the designed controller, derived from a data-based semidefinite program, aims at rendering the closed loop dominantly linear. 
With perfect nonlinearity cancellation, the  origin of the closed loop is globally asymptotically stable; otherwise locally asymptotically stable. 
As detailed in \cite{hu2024enforcing}, the second approach formulates data-based conditions to ensure contractive dynamics \cite{lohmiller1998contraction}, which lead to asymptotic stability of the origin.
We consider these two design approaches since they can be enhanced with a data-based design of the event-triggering policy, via their associated (classical) Lyapunov functions obtained from data.
The execution of the proposed feedback control laws is determined by one of two event-triggering rules, which are either based on an ``error-state threshold" or on an ``error-library threshold".
 
\textbf{Contribution.}  
We propose the design of an event-triggered control framework for a class of nonlinear systems that is overall data-based and, hence, does not need a dynamical model of the nonlinear system. 
The two design methods for the controller (nonlinearity cancellation and contractivity) are compared in the context of event-triggered control after we combine them with an event-triggering policy. Unlike most model- and data-based event-triggered schemes, we achieve 
stability properties of our scheme without resorting to an ISS assumption, thereby broadening the applicability of the proposed scheme.
At the same time, the existence of a minimum inter-event time is guaranteed by a data-based lower bound and Zeno behavior is excluded.

\textbf{Structure.} The formulation is introduced in Section \ref{sec:framework}. The data-based representation and the feedback controller design are presented in Section \ref{sec:controller}. Two different triggering mechanisms are derived in Section \ref{sec:policy}, where we establish asymptotic stability properties and the existence of a minimum inter-event time. Section \ref{sec:sim} provides numerical simulations.  

\textbf{Notation.}  $\mathbb N_0:=\{0,1,2,\dots\}$ is the set of non-negative integers and $\N:=\N_0\backslash\{0\}$. $\R$ is the set of real numbers.
For a matrix $M$, the notation
$M \succ 0$ ($M \succeq 0$) and $M \prec 0$ ($M \preceq 0$)
means that $M$ is symmetric positive and negative (semi-)definite, respectively.
The transpose of a matrix $M$ is $M^\top$. 
The identity and the zero matrices, whose dimensions depend on the context, are $I$ and $0$;
on the other hand, for $n \in \N$, $I_n$ is an $n$-by-$n$ identity matrix.
For a positive semidefinite matrix $A$, $A^{\frac{1}{2}}$ is its unique positive semidefinite square root.
We use $\|\cdot\|$ for the induced $2$-norm of a matrix.
For column vectors $x_1 \in \R^{n_1}$, \dots, $x_N \in \R^{n_N}$, $(x_1, \dots, x_N) := \smat{x_1^\top & \dots & x_N^\top}^\top \in \R^{n_1 + \dots + n_N}$.

\section{Problem formulation} \label{sec:framework}
\subsection{System and control objective}

We consider control-affine systems of the form
\begin{equation}
\label{system}
\dot x(t) = f(x(t)) + B u(t)
\end{equation}  
where: $t\in [0,\infty)$; $x(t) \in \mathbb R^{n}$ is the state and $u(t) \in \mathbb R^{m}$ is the control input for $n\in \N$, $m\in \N$; $f \colon \mathbb R^{n} \to \mathbb R^{n}$ is a smooth vector field with $f(0)=0$ and $B$ is a constant matrix. 
In the case of state-dependent input vector fields, i.e., $B$ replaced by some nonlinear map $x \mapsto g(x)$, a similar approach can be adopted by considering a dynamical controller of the type $\dot{u} = v$ and designing a control law for $v$ rather than for $u$, see \cite[\S V-B]{derotte2023}\footnote{In the present data-based event-triggered setting, this would require an actuator able to integrate the constant value of $v$ provided via the network until an updated value is provided, upon satisfaction of an event-triggering condition.}.

The goal is to design a map $\kappa$ and a sampling (triggering) policy so that  the control law
\begin{equation} \label{control}
u(t) = \kappa(x(t_k)), \quad t \in [t_k,t_{k+1}) 
\end{equation}
renders the origin of the closed-loop system asymptotically stable. 
Here, $\kappa$ represents a possibly nonlinear control map, while $\{t_k\}_{k \in \mathcal{N}}$ with $\mathcal{N}\subseteq\mathbb N_0$ is the sequence of 
sampling times, i.e., the sequence of times at which the control law is updated. 
The solutions to \eqref{system}-\eqref{control} are understood in the Carath\'eodory sense, so for any $k,k+1\in\mathcal{N}$ the solution flows on $[t_k,t_{k+1}]$ and experiences a jump at $t_{k+1}$. 
We will establish later in our main results in Section~\ref{sec:policy} that, for any $k,k+1\in\mathcal{N}$, $t_{k+1}-t_{k}$ is  lower bounded by a strictly positive constant independent of $k$ and of the initial condition so that the $t_k$'s do not accumulate (i.e., Zeno phenomenon does not occur).
Also, by solution to a dynamical system we always mean a maximal solution. 

\subsection{Scenario of interest}

Various solutions to this problem are available in the literature when $f$ and $B$ are known see, e.g., \cite{Tabuada07,Girard-tac15,Postoyan2014,Heemels2012,Abdelrahim2017,Dolk-et-al-tac17,dpt_DOS_2015,Wang-Lemmon-aut11,Liu-Jiang-aut15}. 
We are interested in the scenario where both $f$ and $B$ are unknown but we have access to experimental data and some priors on $f$. 
Firstly, we assume some prior information on $f$.

\begin{assumption} \label{ass:library} 
For $s \in \N$, we know a smooth function $\zeta\colon \mathbb R^{n} \rightarrow \mathbb R^{s}$ with $\zeta(0)=0$ such that, for some unknown matrix $A$, $f(x)=A\zeta(x)$ for all $x \in \R^n$.
\end{assumption}

Analogously to what is assumed in sparse system identification \cite{trajectory-matrix-application}, Assumption~\ref{ass:library} means that we know a library of functions that includes the \emph{ground truth}.
On the other hand, one can always consider more nonlinearities in $\zeta$ than those appearing in $f$, which would lead to more decision variables in the control design programs in Section~\ref{sec:controller}.

Recent work \cite{derotte2023} has considered this problem in a classical setting in which only the control law has to be designed. 
Here, the problem of designing also the sampling times, in addition to the nonlinear setting, makes the problem more challenging.
Under Assumption \ref{ass:library}, the dynamics can be written as
\begin{equation}
\label{systemZ}
\dot x(t) = A\zeta(x(t)) + B u(t)
\end{equation}  
with $A,B$ unknown. 

Assumption~\ref{ass:library} alone is not sufficient to derive any principled design methodology because it leaves $A$ and $B$ unconstrained. We therefore assume that information on the dynamics is obtained through experimental data. Specifically, we assume that we are given a sequence of data collected off-line 
\begin{equation*}
\mathbb D := \left\{ \big(u(t),x(t),\dot x(t) \big) \right\}_{t\in\{t_0, t_{1},\dots, t_{T-1}\}}
\end{equation*}
where $T \in \N$ is the number of samples. $\mathbb D$ consists of input, state and state-derivative data points collected from the system with one or multiple experiments. 
This means we have access to a set of input-state-state derivative samples verifying \eqref{system} for $t\in\{t_0, t_{1},\ldots, t_{T-1}\}$. 
The assumption that the data are clean is unrealistic but is made to keep the exposition simple. 
As a matter of fact, the results presented in this paper can be extended to the case of noisy data by following the same steps as in \cite[\S IV]{deposte2023} as we discuss in Section~\ref{sec:discussion}. Further, even the computation of the state derivative can be avoided \cite[Appendix A]{deposte2023} as we also discuss in Section~\ref{sec:discussion}.
Define the data matrices
\begin{align*}
\begin{array}{lllll}
U_0 & := \Big[ u(t_0) & u(t_{1}) & \cdots & u(t_{T-1})  \Big] , \\
X_0 & := \Big[ x(t_0) & x(t_{1}) & \cdots & x(t_{T-1})  \Big] ,\\
Z_0 & := \Big[ \zeta(x(t_0)) & \zeta(x(t_{1})) & \cdots & \zeta(x(t_{T-1}))  \Big],\\
X_1 & := \Big[ \dot x(t_0) & \dot x(t_{1}) & \cdots & \dot x(t_{T-1})  \Big]. 
\end{array}
\end{align*}
We assume the next condition on data richness \cite{willems2005,alsalti2023}. 

\begin{assumption} \label{ass:rich}
The matrix $\smat{U_0 \\ Z_0}$ has full row rank.
\end{assumption}

This assumption is verifiable from data; since $\smat{U_0 \\ Z_0} \in \mathbb R^{m+s \times T}$, the more data points $T$ one collects, the more columns $\smat{U_0 \\ Z_0}$ has and the higher the chance that $\smat{U_0 \\ Z_0}$ has full row rank.

\section{Learning a feedback controller from data}
\label{sec:controller}

To design the control system we proceed by \emph{emulation}~\cite{Heemels2012}: we first design a feedback controller that stabilizes the closed-loop system in the absence of sampling and we then design a triggering rule. 
For the design of the controller, we follow the approach in \cite{derotte2023} and derive a data-based representation of the closed-loop dynamics that is useful to determine suitable triggering rules later. 

\begin{lem} \label{lem:main}
Let Assumptions~\ref{ass:library}-\ref{ass:rich} hold. 
Consider any matrix $K \in \mathbb R^{m \times s}$. 
Then, \eqref{systemZ} with the control law $u=K\zeta(x)$ results in
the closed-loop dynamics 
\begin{equation*} \label{eq:GK_closed}
\dot{x}(t) = X_1 G \zeta(x(t))
\end{equation*}
where $G \in \mathbb R^{T \times s}$ is any solution to $\smat{K \\ I_s}=\smat{ U_0 \\ Z_0} G$.
\end{lem}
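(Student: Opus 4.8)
The plan is to proceed in three short steps: establish that a matrix $G$ as in the statement exists, show that $X_1 G$ equals the (unknown) closed-loop system matrix $A+BK$, and then read off the closed-loop dynamics.

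First I would invoke Assumption~\ref{ass:rich}: since $\smat{U_0 \\ Z_0}$ has full row rank it admits a right inverse, so the linear equation $\smat{U_0 \\ Z_0} G = \smat{K \\ I_s}$ is consistent and has at least one solution $G \in \R^{T\times s}$; when $T>m+s$ there are infinitely many, which is why the statement quantifies over \emph{any} such $G$. This reduces to the two block relations $U_0 G = K$ and $Z_0 G = I_s$.

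Next I would use that the recorded data are consistent with \eqref{systemZ}, i.e.\ that $\dot x(t_i) = A\zeta(x(t_i)) + B u(t_i)$ for every sample $t_i$, which in stacked form gives the data-consistency identity $X_1 = A Z_0 + B U_0$. Multiplying this on the right by a solution $G$ of the defining equation and using $Z_0 G = I_s$ and $U_0 G = K$ yields $X_1 G = A Z_0 G + B U_0 G = A + BK$; in particular this shows that $X_1 G$ is the same for every admissible $G$, so the closed-loop representation is unambiguous. Finally, substituting $u=K\zeta(x)$ into \eqref{systemZ} gives $\dot x = A\zeta(x) + BK\zeta(x) = (A+BK)\zeta(x) = X_1 G\,\zeta(x)$, which is the claimed closed-loop dynamics.

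I do not expect a genuine obstacle here: the content is entirely linear algebra. The only subtlety worth stating explicitly is that $A$ and $B$ are never known or identified — the identity $X_1 G = A+BK$ is obtained purely from the data-consistency relation together with the definition of $G$ — and that one should record the independence of $X_1 G$ from the particular choice of $G$, which makes the right-hand side $X_1 G\,\zeta(x)$ well posed despite the possible non-uniqueness of $G$.
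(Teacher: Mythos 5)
Your proposal is correct and follows exactly the argument that the paper delegates to \cite[Lemma 1]{derotte2023}: existence of $G$ from Assumption~\ref{ass:rich}, the data-consistency identity $X_1 = A Z_0 + B U_0$ from Assumption~\ref{ass:library}, and the substitution $X_1 G = A Z_0 G + B U_0 G = A + BK$. Nothing is missing; the observation that $X_1 G$ is independent of the particular solution $G$ is a worthwhile explicit remark but not a new idea.
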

\begin{proof}
See \cite[Lemma 1]{derotte2023}.
\end{proof}

Lemma \ref{lem:main} gives a data-based representation of the closed-loop system when the controller takes the form $u=K\zeta(x)$. 
The reason for considering a control law of this form is to manage the nonlinearities in the system dynamics. 
We will demonstrate that this choice is effective and allows us to recreate two popular strategies for nonlinear controller design from data: Lyapunov’s indirect method (or linearization method) and contractivity. 
We discuss both strategies as they operate under different assumptions and may therefore be preferred depending on the specific problem at hand.

\subsection{Linearization method}
\label{sec:linearization}

Assume without loss of generality that $\zeta$ contains both linear and nonlinear functions and is partitioned as
\begin{equation} \label{eq:def_zeta}
\zeta(x) = \begin{bmatrix} x \\ Q(x) \end{bmatrix}
\end{equation} 
where $Q\colon \mathbb{R}^{n} \rightarrow \mathbb{R}^{s-n}$ contains all the nonlinearities. 
Further, as $\zeta$ is smooth and $\zeta(0)=0$ by Assumption~\ref{ass:library}, we  assume without loss of generality\footnote{
If this is not the case, a Taylor expansion of $Q$ around $0$ yields $Q(x) = \partial Q/\partial x(0) \,x  + r(x)$ with $r(x)$ a nonlinear map converging to zero faster than linearly; so, we can redefine $\zeta$ using $r$ in place of $Q$, see \cite{derotte2023}.
}
that $Q(x)$ is a higher-order term than $x$, i.e.,
\begin{equation}\label{prop-Q}
\lim_{x \rightarrow 0} \frac{|Q(x)|}{|x|} = 0. 
\end{equation}
By recalling Lemma \ref{lem:main} and
partitioning $G =: \begin{bmatrix} G_1 & G_2 \end{bmatrix}$, with
$G_1 \in \mathbb R^{T \times n}$ and $G_2 \in \mathbb R^{T \times (s-n)}$, the closed-loop dynamics is
\begin{equation} \label{eq:GK_closed_ext}
\dot{x}(t) = X_1G_1 x(t) + X_1 G_2 Q(x(t)). 
\end{equation}
Since $Q$ converges to zero faster than linearly, \eqref{eq:GK_closed_ext} provides a representation of the closed-loop dynamics that is suitable for control design. 
In particular, by Lyapunov's indirect method, a sufficient condition to stabilize the origin is that the matrix $X_1G_1$ is Hurwitz.
This can be translated into the program 
\begin{subequations}
\label{eq:NC}
\begin{align}
& \mathrm{minimize} && \| X_1 G_2 \|  \qquad (\text{for } P \succ 0,G,K)\label{eq:NCa}\\
& \mathrm{subject~to} && (X_1 G_1) P + P (X_1 G_1)^\top+ \Omega \preceq 0 \label{eq:NCb}\\
& && \begin{bmatrix} K \\ I_s \end{bmatrix} = \begin{bmatrix} U_0 \\ Z_0 \end{bmatrix} G \label{eq:NCc}
\end{align}
\end{subequations}
where $\Omega \succ 0$ is arbitrary.
In~\eqref{eq:NC}, \eqref{eq:NCb} imposes that $X_1 G_1$ is Hurwitz and \eqref{eq:NCc} that Lemma~\ref{lem:main} can be applied.
By allowing the controller $u = K \zeta(x) =: K_1 x + K_2 Q(x)$ to be nonlinear, its linear part $K_1 x$ stabilizes the linear part of the closed loop and its nonlinear part $ K_2 Q(x)$ minimizes the impact of the nonlinearities by way of \eqref{eq:NCa}, so that the closed loop is predominantly linear.
This is beneficial in enlarging an estimate of the basin of attraction in the sequel.
\eqref{eq:NC} can be made convex with standard manipulations by considering the decision variable $Y_1=G_1P$ in place of $G_1$, see \cite{derotte2023}.
We summarize the stability properties achieved by~\eqref{eq:NC}.
The proof is taken from \cite{derotte2023} and is repeated here for self-containedness.

\begin{proposition} \label{prop:NC}
Let Assumptions \ref{ass:library}-\ref{ass:rich} hold and
suppose that \eqref{eq:NC} is feasible and $K$ is returned. The origin of the closed-loop system
\eqref{system} with the control law $u=K\zeta(x)$ is asymptotically stable. 
\end{proposition}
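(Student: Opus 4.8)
The plan is to combine the data-based closed-loop representation of Lemma~\ref{lem:main} with a direct Lyapunov argument built on the matrix $P$ returned by~\eqref{eq:NC}; equivalently, one may invoke Lyapunov's indirect method once it is observed that $X_1 G_1$ is the Jacobian of the closed loop at the origin. First I would use Lemma~\ref{lem:main} together with the partition of $G$: feasibility of~\eqref{eq:NCc} guarantees that the closed loop under $u = K\zeta(x)$ is exactly~\eqref{eq:GK_closed_ext}, namely $\dot x = X_1 G_1 x + X_1 G_2 Q(x)$.

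Next I would extract from~\eqref{eq:NCb} that $X_1 G_1$ is Hurwitz with a companion Lyapunov matrix: pre- and post-multiplying $(X_1 G_1)P + P(X_1 G_1)^\top + \Omega \preceq 0$ by $P^{-1}$ (which exists since $P \succ 0$) yields $P^{-1}(X_1G_1) + (X_1G_1)^\top P^{-1} \preceq - P^{-1}\Omega P^{-1} =: -\bar\Omega \prec 0$. Hence $V(x) := x^\top P^{-1} x$ is a positive definite candidate Lyapunov function whose derivative along~\eqref{eq:GK_closed_ext} is $\dot V(x) = x^\top\big(P^{-1}X_1G_1 + (X_1G_1)^\top P^{-1}\big)x + 2 x^\top P^{-1} X_1 G_2 Q(x) \le -x^\top \bar\Omega x + 2\|P^{-1}X_1G_2\|\,|x|\,|Q(x)|$ by Cauchy--Schwarz.

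The key step is then to dominate the cross term using the higher-order property~\eqref{prop-Q}: given $\varepsilon>0$ there is $\delta>0$ such that $|Q(x)| \le \varepsilon |x|$ for $|x| \le \delta$, hence $\dot V(x) \le -\big(\lambda_{\min}(\bar\Omega) - 2\varepsilon\|P^{-1}X_1G_2\|\big)|x|^2$ on that ball. Choosing $\varepsilon$ small enough that the bracket is positive makes $\dot V$ negative definite on a neighborhood of the origin, and standard Lyapunov theory yields (local) asymptotic stability. Equivalently, since smoothness of $Q$ together with~\eqref{prop-Q} forces $\partial Q/\partial x(0)=0$, the Jacobian of~\eqref{eq:GK_closed_ext} at $0$ is $X_1 G_1$, which is Hurwitz by the above, so Lyapunov's indirect method applies directly.

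I do not expect any serious obstacle: this is the classical linearization/indirect-method reasoning transported to the data-based closed loop. The only points requiring care are (i) that the nonlinearity is only approximately, not exactly, cancelled --- \eqref{eq:NCa} minimizes $\|X_1 G_2\|$ but need not drive it to zero --- so the conclusion is local rather than global asymptotic stability, with the size of the guaranteed region tied to $\|P^{-1} X_1 G_2\|$ and $\lambda_{\min}(\bar\Omega)$; and (ii) that writing $\zeta = (x, Q(x))$ with $Q$ of higher order than $x$ is without loss of generality, as recorded in the footnote after~\eqref{prop-Q}.
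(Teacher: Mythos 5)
Your proposal is correct and follows essentially the same route as the paper: both use Lemma~\ref{lem:main} to obtain the closed loop $\dot x = X_1G_1x + X_1G_2Q(x)$, take $V(x)=x^\top P^{-1}x$, extract $P^{-1}X_1G_1 + (X_1G_1)^\top P^{-1} \preceq -P^{-1}\Omega P^{-1} \prec 0$ from~\eqref{eq:NCb}, and dominate the cross term via the higher-order property~\eqref{prop-Q}. The only difference is that you spell out the final $\varepsilon$--$\delta$ domination step and the locality caveat explicitly, which the paper leaves implicit.
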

\begin{proof}
By Assumptions \ref{ass:library}-\ref{ass:rich} and \eqref{eq:NCc}, \eqref{system} can be written as in \eqref{eq:GK_closed_ext}. Let $V(x):=x^\top S x$ for $S=P^{-1}$. By simple calculations,
\begin{align*}
& \frac{\partial V}{\partial x}(x) X_1 G \zeta(x) 
= 2 x^\top S \left( X_1 G_1 x + X_1 G_2 Q(x) \right) \\
&\leq - x^\top S \Omega S x + 2 x^\top SX_1G_2 Q(x) \quad \forall x \in \mathbb{R}^n
\end{align*}
by~\eqref{eq:NCb}.
The result holds by $S \Omega S \succ 0$ and $Q$ converging to zero faster than linearly.
\end{proof}

\subsection{Contractive design} \label{sec:contractive}

An alternative to linearization is to render the dynamics contractive \cite{lohmiller1998contraction,pavlov2004convergent}. We next recall such a property. 

\begin{definition}
System $\dot x = \mathcal{M} \zeta(x)$ is exponentially contractive on $\mathcal{X} \subseteq \mathbb{R}^n$ if there exist a matrix $S \succ 0$ and a scalar $\beta>0$ such that 
\begin{equation}
\label{contractivity}
\begin{array}{l}
\big(\mathcal{M} \displaystyle\frac{\partial \zeta}{\partial x}(x) \big)^\top S + S
\mathcal{M} \displaystyle\frac{\partial \zeta}{\partial x}(x) \preceq -\beta S
\quad \forall x\in \mathcal{X}. 
\end{array}
\end{equation}
In this case, we say that $\mathcal{X}$ is the contraction region with respect to the metric $S$.
\end{definition}

As with the linearization method, we can translate the property of $\dot x= X_1 G \zeta(x)$ being exponentially contractive on $\mathcal{X}$ into a data-based design program
\begin{subequations}
\label{eq:CONTR}
\begin{align}
& \mathrm{min.} && 0 \qquad (\text{for } P \succ 0,G,K) \label{eq:CONTRa}\\
& \mathrm{s.~t.} && 
\bmat{
(X_1 G_1) P + P (X_1 G_1)^\top + \Omega  & X_1 G_2 & P R_Q  \\ 
(X_1 G_2)^\top  & - I_{s-n} & 0 \\
(P R_Q)^\top & 0 & - I_{r}
} \preceq 0 \label{eq:CONTRb}\\
& && \begin{bmatrix} K \\ I_s \end{bmatrix} = \begin{bmatrix} U_0 \\ Z_0 \end{bmatrix} G 
\end{align}
\end{subequations}
where $\Omega \succ 0$ is arbitrary and $R_Q\in \mathbb R^{n \times r}$ is a known constant matrix such that 
\begin{equation} \label{asspt}
\frac{\partial Q}{\partial x}(x)^\top \frac{\partial Q}{\partial x}(x)\preceq R_Q R_Q^\top 
\quad \forall x\in \mathcal{X}. 
\end{equation} 
Since $Q(\cdot)$ is smooth, $R_Q$ exists when $\mathcal{X}$ is compact.
\eqref{eq:CONTR} can be made convex by standard manipulations.
We summarize the stability properties achieved via~\eqref{eq:CONTR} in the next result, after which we motivate the used contractive design. The proof is taken from \cite{hu2024enforcing} and is repeated here for self-containedness.
\begin{proposition} \label{prop:contractivity}
Let Assumptions \ref{ass:library}-\ref{ass:rich} hold and let $\mathcal{X}$ be a convex set containing the origin in its interior.
Suppose that \eqref{eq:CONTR} is feasible and $K$ is returned. Then, the origin of the closed-loop system
\eqref{system} with the control law $u=K\zeta(x)$ is asymptotically stable. 
\end{proposition}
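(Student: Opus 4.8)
The plan is to show that the controller $K$ returned by \eqref{eq:CONTR} makes the closed-loop vector field exponentially contractive on $\mathcal{X}$ in the sense of \eqref{contractivity}, and then to deduce asymptotic stability of the origin from contractivity together with the fact that the origin is an equilibrium of the closed loop.

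First I would invoke Lemma~\ref{lem:main} (applicable under Assumptions~\ref{ass:library}--\ref{ass:rich} and the equality constraint of \eqref{eq:CONTR}) to write the closed loop as $\dot x = X_1 G \zeta(x)$, and, using the partition \eqref{eq:def_zeta} together with $G = [G_1\ \ G_2]$, record that $\tfrac{\partial\zeta}{\partial x}(x) = \smat{I_n \\ \partial Q/\partial x(x)}$, so $X_1 G \tfrac{\partial\zeta}{\partial x}(x) = X_1 G_1 + X_1 G_2\,\tfrac{\partial Q}{\partial x}(x)$. Setting $S := P^{-1}$ and applying a congruence by $P$, I would rewrite the target contractivity inequality \eqref{contractivity} (with $\mathcal{M} = X_1 G$) in the equivalent form $(X_1 G_1)P + P(X_1 G_1)^\top + X_1 G_2\tfrac{\partial Q}{\partial x}(x)P + P\tfrac{\partial Q}{\partial x}(x)^\top (X_1 G_2)^\top \preceq -\beta P$ for all $x \in \mathcal{X}$.

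Next I would establish this inequality from the LMI \eqref{eq:CONTRb}: a Schur complement with respect to the negative-definite block $\mathrm{diag}(-I_{s-n},-I_r)$ turns \eqref{eq:CONTRb} into $(X_1 G_1)P + P(X_1 G_1)^\top + \Omega + X_1 G_2 (X_1 G_2)^\top + P R_Q R_Q^\top P \preceq 0$; a completion-of-squares (Young's) inequality gives $X_1 G_2 \tfrac{\partial Q}{\partial x}(x)P + P\tfrac{\partial Q}{\partial x}(x)^\top(X_1 G_2)^\top \preceq X_1 G_2 (X_1 G_2)^\top + P\tfrac{\partial Q}{\partial x}(x)^\top\tfrac{\partial Q}{\partial x}(x)P$; and \eqref{asspt} lets me replace $\tfrac{\partial Q}{\partial x}(x)^\top\tfrac{\partial Q}{\partial x}(x)$ by $R_Q R_Q^\top$ for $x \in \mathcal{X}$. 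Chaining these three facts shows the left-hand side above is $\preceq -\Omega$ on $\mathcal{X}$; since $\Omega \succ 0$ and $P \succ 0$, any sufficiently small $\beta > 0$ gives $\Omega \succeq \beta P$, and the closed loop is exponentially contractive on $\mathcal{X}$ with metric $S$ and rate $\beta$.

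Finally I would pass from contractivity to asymptotic stability of the origin, which is an equilibrium since $\zeta(0) = 0$ implies $X_1 G \zeta(0) = 0$. With $V(x) := x^\top S x$ and $h(x) := X_1 G \zeta(x)$, I would write $h(x) = h(x) - h(0) = \int_0^1 \tfrac{\partial h}{\partial x}(\sigma x)\,x\,d\sigma$; because $\mathcal{X}$ is convex and contains $0$, every $\sigma x$ with $\sigma \in [0,1]$ lies in $\mathcal{X}$ when $x \in \mathcal{X}$, so the contractivity inequality applies at $\sigma x$ and yields $\dot V(x) = \int_0^1 x^\top\big(S\tfrac{\partial h}{\partial x}(\sigma x) + \tfrac{\partial h}{\partial x}(\sigma x)^\top S\big)x\,d\sigma \le -\beta\,V(x)$ for all $x \in \mathcal{X}$. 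Choosing $c > 0$ so that the compact sublevel set $\{x : V(x) \le c\}$ is contained in $\mathcal{X}$ (possible as $0 \in \mathrm{int}\,\mathcal{X}$), this set is forward invariant, $V$ decays exponentially on it, and hence $x(t) \to 0$; the origin is thus asymptotically stable, with $\{V \le c\}$ an inner estimate of its region of attraction. The hard part, I expect, is this last passage: converting the pointwise differential contraction condition into a bona fide Lyapunov decrease along trajectories, which is where convexity of $\mathcal{X}$ around the origin is indispensable and where one must be careful to state the basin estimate as a $V$-sublevel set inside $\mathcal{X}$, not $\mathcal{X}$ itself. By contrast, the LMI algebra (Schur complement, Young's inequality, and \eqref{asspt}) is routine.
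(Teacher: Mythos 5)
Your proof is correct and follows essentially the same route as the paper's: a Schur complement of \eqref{eq:CONTRb}, a bound on the Jacobian cross-term via \eqref{asspt} to obtain exponential contractivity of $\dot x = X_1 G\zeta(x)$ on $\mathcal{X}$, and then integration of the contraction inequality along the segment from $0$ to $x$ (using convexity of $\mathcal{X}$ and $\zeta(0)=0$) to get $\dot V \le -\beta V$ and invariance of $V$-sublevel sets contained in $\mathcal{X}$. The only deviations are cosmetic: you apply Young's inequality directly where the paper invokes the nonstrict Petersen's lemma (your version needs only the sufficiency direction and so avoids the paper's case split on $X_1 G_2 = 0$ versus $X_1 G_2 \neq 0$), and you use the fundamental theorem of calculus in integral form where the paper uses the mean value theorem.
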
 
\begin{proof}
By Schur complement and $S:=P^{-1}$, \eqref{eq:CONTRb} gives 
\begin{equation}
\label{int-ineq}
S X_1 G_1 +(X_1 G_1)^\top S  + S \Omega S 
+  R_Q R_Q^\top  + S X_1 G_2 (X_1 G_2)^\top S \preceq 0.
\end{equation}
This proves that there exist a real $\varepsilon :=1$ and  matrices
\begin{align*}
& H := S X_1 G_1 +(X_1 G_1)^\top S  + S \Omega S, &&  J := S X_1 G_2,\\
& F := I_n, && O := R_Q R_Q^\top
\end{align*}
satisfying $H+\varepsilon J J^\top +\varepsilon^{-1} F^\top O F\preceq 0$. 
Now recall that under Assumptions \ref{ass:library}-\ref{ass:rich} the dynamics of the closed-loop system can be written as $\dot{x} = X_1 G \zeta(x)$. 
If the solution of \eqref{eq:CONTR} is such that $J=0$ then $(A+BK)\zeta(x)= X_1 G_1 x$ and, therefore, $(A+BK)\frac{\partial \zeta}{\partial x}(x)= X_1 G_1$. 
By \eqref{int-ineq}, the closed-loop dynamics is exponentially contractive with
$\beta :=\underline \lambda(S^{\frac{1}{2}} \Omega S^{\frac{1}{2}}) > 0$, 
where $\underline \lambda$ denotes the smallest eigenvalue.
If instead $J\ne 0$ then by the nonstrict Petersen's lemma \cite[Fact 2]{bisoffi2022data} we have that $H+\varepsilon JJ^\top +\varepsilon^{-1} F^\top O F\preceq 0$ implies $H+ J R^\top  F+ F^\top R J^\top\preceq 0$ for all $R$ such that $R R^\top\preceq O$. 
By combining this inequality with~\eqref{asspt} we obtain that for all $x\in \mathcal{X}$, $(X_1 G_1)^\top  S + S X_1 G_1 +  S \Omega S + 
S X_1 G_2 \frac{\partial Q}{\partial x}(x) + (X_1 G_2 \frac{\partial Q}{\partial x}(x))^\top S \preceq 0 $.
With $\beta=\underline \lambda(S^{\frac{1}{2}} \Omega S^{\frac{1}{2}})$, this implies $(X_1 G \frac{\partial \zeta}{\partial x}(x))^\top S + S X_1 G \frac{\partial \zeta}{\partial x}(x) \preceq -\beta S$ for all $x\in \mathcal{X}$, which implies again exponential contractivity.

We can now finalize the proof. Let $V(x) :=x^\top S x$. By the previous calculations we have
\begin{align}
& \frac{\partial V}{\partial x}(x) X_1G \zeta(x) 
= 2 x^\top S X_1G \zeta(x) \stackrel{(a)}{=} 2 x^\top S X_1G \big(\zeta(x)-\zeta(0) \big) \nonumber \\
&\stackrel{(b)}{=} 2 x^\top S X_1G \left[\frac{\partial \zeta}{\partial x}(\eta x)\right]_{\eta\in (0,1)} x \stackrel{(c)}{\le} -\beta x^\top S x \quad  \forall x\in \mathcal{X} \label{dotValongSolContr}
\end{align}  
where: $(a)$ holds by $\zeta(0)=0$; $(b)$ holds by the mean value theorem applied to the function $\eta \mapsto 2 x^\top S X_1 G \zeta(\eta x)$ from $[0,1]$ to $\mathbb{R}$; $(c)$ by $\mathcal{X}$ convex and the exponential contractivity property. Hence, any sublevel set $\mathcal{R}_\gamma : =\left\{x\in \mathbb{R}^n\colon x^\top S x \le \gamma \right\}$ of $V$, $\gamma \ge 0$, included in $\mathcal{X}$ is a forward invariant set for $\dot x = (A+BK) \zeta(x)$ and $\dot x = (A+BK) \zeta(x)$ is forward complete on $\mathcal{X}$.  Further, any solution of $\dot x = (A+BK) \zeta(x)$ initialized in $\mathcal{R}_\gamma$ exponentially converges to the origin \cite[Thm.~4.10]{khalil2002nonlinear}.
\end{proof}

Compared with the linearization method, contractive design entails stronger feasibility conditions since \eqref{eq:CONTRb} is more demanding than \eqref{eq:NCb}. 
Further, Lyapunov's indirect method allows us to treat the term $X_1 G_2 Q(x)$
that appears in the expression $\dot x = X_1 G_1 x + X_1 G_2 Q(x)$ as an \emph{unknown} quantity,
see \cite{derotte2023} for details. 
On the other hand, contractive design allows us to have \emph{global} stability properties if \eqref{contractivity} holds with $\mathcal{X}=\mathbb{R}^n$. 
This is instead not achievable (at least not by design) with Lyapunov's indirect method, and requires exact nonlinearity cancellation, {i.e.}, $X_1G_2=0$, which is possible only when the system to control has a specific structure \cite{derotte2023,gdpt2023cdc}.

Finally, we remark that \cite[Theorem 1]{hu2024enforcing}, from which Proposition~\ref{prop:contractivity} follows,  does not require an equilibrium at the origin for the unforced dynamics. 
On the contrary, we assume here that $\zeta(0)=0$. This implies that Proposition~\ref{prop:contractivity} is the data-based version of Krasovskii's stability theorem \cite{pavlov2004convergent}, \cite[Exercise 4.10]{khalil2002nonlinear}.%

\section{Learning a triggering policy from data} 
\label{sec:policy}

We discuss the digital implementation of the controller
\begin{equation} \label{controlDIG}
u(t) = K \zeta(x(t_k)), \quad t \in [t_k,t_{k+1}). 
\end{equation}
The closed-loop system of~\eqref{systemZ} and \eqref{controlDIG} is rewritten as
\begin{align}
\dot x(t) &=  A \zeta(x(t)) + B K \zeta(x(t_k))  = X_1 G \zeta(x(t)) +  B K e(t) \label{eq:closed-loop}
\end{align}
where  
\begin{equation} \label{eq:e}
e(t) := \zeta(x(t_k)) - \zeta(x(t)), \quad t \in [t_k,t_{k+1}),
\end{equation}
represents the sampling-induced error in computing $\zeta$ as a consequence of the mismatch between the last value of the state transmitted to the controller and its current value. 
The nonlinear setting in~\eqref{eq:closed-loop} extends the linear setting in~\cite{deposte2023}.
As customary in the control-by-emulation literature, $e$ is regarded
as a disturbance to the nominal dynamics, which is then ``controlled'' by a triggering condition so that stability is preserved despite infrequent sampling.

Let us make this argument precise.
Consider any of the two design methods described in Section~\ref{sec:controller}. By the previous argument, letting $V(x) :=x^\top S x$ with $S=P^{-1} \succ 0$ yields
\begin{align}
& \frac{\partial V}{\partial x}(x) (X_1 G \zeta(x) +  B K e) = 2 x^\top S \left( X_1 G \zeta(x) +  B K e \right) \nonumber \\
&  \leq - x^\top \Theta x + 2 x^\top \Phi Q(x) + 2 x^\top SBK e 
\quad  \forall x\in \mathcal{Y} \label{eq:closed-loopV3}
\end{align}
where
\begin{equation}
\label{quantities_for_2_designs}
\begin{aligned}
\mathcal{Y}  & =  \mathbb{R}^n,  &  \hspace*{-4pt}\Theta & = S\Omega S, & \hspace*{-4pt}\Phi & = S X_1 G_2 &  & \hspace*{-3pt}\text{for linearization method}; \\
\mathcal{Y} &=  \mathcal{X}, & \hspace*{-4pt}\Theta & =\beta S, & \hspace*{-4pt}\Phi & = 0 & & \hspace*{-3pt}\text{for contractive design};
\end{aligned}
\end{equation}
cf.~the proofs of Propositions \ref{prop:NC} and \ref{prop:contractivity}.
In the sequel, we will refer to these two cases for the quantities $\mathcal{Y}$, $\Theta$, $\Phi$.
In either cases, $\Theta \succ0$ and $x^\top \Phi Q(x)$ is a higher-order term than $x^\top \Theta x$. 
Accordingly, to ensure stability for the sampled-data implementation, it is sufficient that $|e| \leq \sigma |x|$ or $|e| \leq \sigma |\zeta(x)|$ for a sufficiently small constant $\sigma>0$. 
Both strategies have relative merits, so we will discuss both.

\subsection{Error-state threshold}
\label{sec:policy:error-state}

We start by considering a sampling strategy that guarantees $|e| \leq \sigma |x|$ for all times. 
A simple way to enforce this condition is via an event-triggering rule, originally developed in the model-based case \cite{Tabuada07} and which we now recreate in the context of data-driven control. To this end, we introduce the next parameterized matrix $\Psi(\cdot)$ and the vector $\nu$,
\begin{equation*}
\Psi(\sigma) := \begin{bmatrix} -\sigma^2 I_n & 0 \\ 
0 & I_s \end{bmatrix} \text{ and }
\nu := \begin{bmatrix} x \\ e \end{bmatrix}
\end{equation*}
with $\sigma > 0$ a design parameter to be determined.
Note that $\nu^\top \Psi(\sigma) \nu = - \sigma^2 |x|^2 + |e|^2.$
The sampling times are defined as
\begin{equation} \label{eq:triggering}
t_{k+1} := 
\begin{cases}
\inf \{t \in \mathbb R\colon t > t_k \text{ and } \nu(t)^\top \Psi(\sigma) \nu(t) = 0  \}  & \text{if } x(t_k) \neq 0 \\
+ \infty & \text{otherwise}
\end{cases}
\end{equation}
with $t_0$ given.
This logic ensures by design that $\nu^\top \Psi(\sigma) \nu \leq 0$  
along the solutions to \eqref{eq:closed-loop}, \eqref{eq:e}, \eqref{eq:triggering} as long as they
exist. 
We shall prove later that the sequence of 
sampling instants does not result in an accumulation point, which guarantees that a solution to \eqref{eq:closed-loop}, \eqref{eq:e}, \eqref{eq:triggering}  exists for all times (and is unique).
In Section~\ref{sec:discussion} we discuss how \eqref{eq:triggering} could be modified when the dynamics is perturbed by noise during closed-loop execution.

Closed-loop stability depends on $\sigma$, which must be chosen sufficiently small to  control the norm of error $e$. 
To this end, note that \eqref{eq:closed-loopV3} is equivalent to
\begin{align}
& \frac{\partial V}{\partial x}(x) (X_1 G \zeta(x) +  B K e) \notag  \\
& \le -\frac{1}{2} x^\top \Theta x + 2 x^\top \Phi Q(x) +
\begin{bmatrix} x \\ e  \end{bmatrix}^\top
\underbrace{\begin{bmatrix} -\frac{1}{2} \Theta  &  S X_1 L \\ (S X_1 L)^\top & 0 \end{bmatrix}}_{=:M}
\begin{bmatrix} x \\ e  \end{bmatrix} \label{eq:closed-loopV4}
\end{align}%
for all $x \in \mathcal{Y}$,
where $L$ is any solution to
\begin{equation} \label{eq:L}
\begin{bmatrix} K \\ 0 \end{bmatrix} = \begin{bmatrix} U_0 \\ Z_0 \end{bmatrix} L ,
\end{equation}
which exists under Assumption \ref{ass:rich} and implies $BK=X_1L$.
Recall that the event-triggering rule \eqref{eq:triggering} ensures that
$\nu^\top \Psi(\sigma) \nu \leq 0$ along the solutions to \eqref{eq:closed-loop}, \eqref{eq:e}, \eqref{eq:triggering}. 
Hence, asymptotic stability of the origin of the closed-loop system \eqref{eq:closed-loop}, \eqref{eq:e} is ensured if we select $\sigma$ to satisfy, for any $\nu\in\R^{n+s}$, the implication
\begin{equation} \label{eq:SDP_triggering_0}
\begin{array}{rlll}
\nu^\top \Psi(\sigma) \nu \leq 0 
& \Longrightarrow &   \nu^\top M \nu \leq 0 
 \end{array}
\end{equation} 
since, in~\eqref{eq:closed-loopV4}, $2 x^\top \Phi Q(x)$ is a higher-order term than $-\frac{1}{2} x^\top \Theta x$.
The next theorem provides a data-based condition to select $\sigma$ in \eqref{eq:triggering} to ensure \eqref{eq:SDP_triggering_0}, by~\eqref{eq:SDP_triggering}, and the desired stability property in turn.

\begin{theorem} \label{thm:exact_sampling}
Let Assumptions \ref{ass:library}-\ref{ass:rich} hold and consider the closed-loop system \eqref{eq:closed-loop}, \eqref{eq:e}, \eqref{eq:triggering}, where $K$ in \eqref{eq:closed-loop} is any solution to \eqref{eq:NC} or  \eqref{eq:CONTR}. 
Let $\mu >0$, $\sigma>0$ be a solution to the linear matrix inequality
\begin{equation}
\label{eq:SDP_triggering}
\mu M - \Psi(\sigma) \preceq 0
\end{equation}
in the decision variables $\mu > 0$ and $\sigma^2$, which is always feasible.
Then, the following holds.
\begin{itemize}[noitemsep,nosep,wide,topsep=-10pt]
\item[(a)] For any compact invariant set $\mathcal{R}$ there exists a \emph{minimum inter-event time}, i.e., for any solution $x$ with $x(0) \in \mathcal{R}$, the sequence of sampling instants 
satisfies $t_{k+1}-t_k \geq \tau(\sigma)$ for every $k \in \mathcal{N} \subseteq \mathbb N_0$, where $\tau(\sigma):=\frac{1}{\ell} \frac{\sigma}{1+\sigma}$ for some constant $\ell$ computed from data alone, as specified in the proof. 
\item[(b)] Let
\begin{align}
& \mathcal{V} := \{x\colon - \tfrac{1}{2} x^\top \Theta x + 2x^\top \Phi Q(x) < 0\}, \quad \mathcal{W} := \mathcal{V} \cap \mathcal{Y} \label{def_set_V_set_W}
\end{align}
with $\mathcal{Y}$ the set where \eqref{eq:closed-loopV3} holds.
Then, the origin of the closed-loop system is asymptotically stable and any set
\begin{align}
\label{def_set_Rgamma}
\mathcal{R}_\gamma := \{x\colon x^\top S x \leq \gamma \}
\end{align}
included in $\mathcal{W} \cup \{0\}$
is an invariant set and is an estimate of the basin of attraction.
\end{itemize}  
\end{theorem}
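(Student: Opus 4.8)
\emph{Proof plan.} The plan is to split the argument into three parts. First, feasibility of \eqref{eq:SDP_triggering}: writing $\mu M-\Psi(\sigma)$ in block form and taking a Schur complement with respect to its negative-definite $(2,2)$-block $-I_s$, the LMI becomes equivalent to $\sigma^2 I_n+\mu^2(SX_1L)(SX_1L)^\top\preceq\tfrac{\mu}{2}\Theta$ (note $X_1L=BK$ is independent of the solution $L$ of \eqref{eq:L} since $X_1=\smat{B&A}\smat{U_0\\Z_0}$). Since $\Theta\succ0$ in both designs, see \eqref{quantities_for_2_designs}, I would first fix $\mu>0$ small enough that $\mu(SX_1L)(SX_1L)^\top\prec\tfrac12\Theta$ and then $\sigma^2>0$ small enough to fit under the remaining slack; this mirrors the linear case of \cite{deposte2023}. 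The second common ingredient is the decrease estimate: the rule \eqref{eq:triggering} keeps $\nu(t)^\top\Psi(\sigma)\nu(t)\le0$ along solutions (it vanishes only at events, and resetting $e$ to $0$ preserves the inequality), so pre- and post-multiplying \eqref{eq:SDP_triggering} by $\nu(t)$ and using $\mu>0$ gives $\nu(t)^\top M\nu(t)\le0$; then \eqref{eq:closed-loopV4} yields $\dot V(x(t))\le-\tfrac12 x^\top\Theta x+2x^\top\Phi Q(x)$ for a.e.\ $t$ while $x(t)\in\mathcal{Y}$, with $V(x)=x^\top Sx$.

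For part (b), observe that on $\mathcal{W}=\mathcal{V}\cap\mathcal{Y}$ the right-hand side of the above is negative, while $0\notin\mathcal{V}$, so one works on $\mathcal{W}\cup\{0\}$. Given $\mathcal{R}_\gamma\subseteq\mathcal{W}\cup\{0\}$ (a compact sublevel set of the positive-definite $V$, contained in $\mathcal{Y}$), I would argue that $\dot V\le0$ on $\mathcal{R}_\gamma$ forces $V$ to be non-increasing and hence $\mathcal{R}_\gamma$ forward invariant for as long as the solution is defined; applying part (a) with $\mathcal{R}=\mathcal{R}_\gamma$ excludes accumulation of $\{t_k\}$, so the solution is complete. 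Finally, since $\dot V$ has a negative maximum on each shell $\{\varepsilon\le x^\top Sx\le\gamma\}\subseteq\mathcal{W}$ by continuity and compactness, the standard Lyapunov argument gives asymptotic stability of the origin with $\mathcal{R}_\gamma$ an estimate of the basin of attraction, cf.\ \cite[Thm.~4.1 and 4.10]{khalil2002nonlinear}.

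Part (a) is the main obstacle and the place where the nonlinear setting genuinely departs from \cite{deposte2023}. Fix a compact invariant $\mathcal{R}$ and a solution with $x(0)\in\mathcal{R}$, so $x(t)\in\mathcal{R}$ for all $t$, and on $[t_k,t_{k+1})$ differentiate $e=\zeta(x(t_k))-\zeta(x(t))$ to get $\dot e=-\tfrac{\partial\zeta}{\partial x}(x)(X_1G\zeta(x)+X_1Le)$. Because $\mathcal{R}$ is compact and $\zeta$ is smooth with $\zeta(0)=0$, the constants $c_1:=\sup_{\mathcal{R}}\|\tfrac{\partial\zeta}{\partial x}(x)\|$ and $c_2:=\sup_{\mathcal{R}\setminus\{0\}}|\zeta(x)|/|x|$ are finite and, together with the data-based norms $\|X_1G\|$ and $\|X_1L\|$, give $|\dot x|\le a(1+\phi)|x|$ and $|\dot e|\le c_1a(1+\phi)|x|$ with $\phi:=|e|/|x|$ and $a:=\max\{c_2\|X_1G\|,\|X_1L\|\}$. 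Hence $\dot\phi\le|\dot e|/|x|+\phi|\dot x|/|x|\le a(1+\phi)(c_1+\phi)\le\ell(1+\phi)^2$ with $\ell:=a(c_1+1)$, and since $\phi$ is $0$ right after $t_k$ the comparison lemma bounds it by the solution of $\dot\psi=\ell(1+\psi)^2$, $\psi(0)=0$, which reaches $\sigma$ only at $\tau(\sigma)=\tfrac1\ell\tfrac{\sigma}{1+\sigma}$; thus $\nu^\top\Psi(\sigma)\nu$ cannot vanish before $\tau(\sigma)$, proving $t_{k+1}-t_k\ge\tau(\sigma)$. The delicate point I expect to spend the most care on is that $\phi$ is defined only where $x(t)\ne0$ and that producing a genuinely data-computable $\ell$ requires $\mathcal{R}$ compact and $\zeta$ known; the former I would handle exactly as in \cite{Tabuada07,deposte2023}, noting that if the trajectory reaches the origin the estimate is vacuous on that interval and otherwise $|x|>0$ throughout $[t_k,t_{k+1})$.
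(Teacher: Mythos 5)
Your overall architecture matches the paper's proof: feasibility of \eqref{eq:SDP_triggering} via a Schur complement against the $-I_s$ block followed by a small-$\mu$/small-$\sigma$ choice (the paper couples them as $\sigma=\sqrt{\mu/c}$, which is an equivalent reordering), the Tabuada-style differential inequality $\tfrac{d}{dt}(|e|/|x|)\le\ell(1+|e|/|x|)^2$ with a data-computable $\ell$ and the comparison solution $\tau(\sigma)=\tfrac{1}{\ell}\tfrac{\sigma}{1+\sigma}$ for item (a), and the combination of the triggering rule with \eqref{eq:closed-loopV4} and \eqref{eq:SDP_triggering} plus standard Lyapunov/sublevel-set arguments for item (b). Your constant $\ell=a(c_1+1)$ differs from the paper's $\ell_1\max\{\|X_1G\|,\|X_1L\|\}\cdot\sup_{\mathcal{R}\setminus\{0\}}|\zeta(x)|/|x|$, but both are valid upper bounds computable from data, which is all the statement requires.

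The one genuine gap is your handling of the possibility that $x(t)=0$ inside an inter-event interval. You write that ``if the trajectory reaches the origin the estimate is vacuous on that interval,'' but it is not: if $x(t_*)=0$ for some $t_*\in(t_k,t_{k+1})$ with $x(t_k)\neq0$, then $\phi=|e|/|x|$ is undefined at $t_*$, the comparison argument cannot be run past $t_*$, and nothing a priori excludes $t_*-t_k<\tau(\sigma)$; the claimed bound on $t_{k+1}-t_k$ would then be unproved, not vacuously true. What is needed --- and what the paper proves by contradiction --- is that this case \emph{cannot occur}: on $[t_k,t_{k+1})$ the rule \eqref{eq:triggering} enforces $|e(t)|\le\sigma|x(t)|$, hence $|\zeta(x(t_k))|\le|\zeta(x(t))|+|e(t)|\le|\zeta(x(t))|+\sigma|x(t)|$; if $x(t)\to0$ as $t\to t_*$ the right-hand side tends to $0$ while the left-hand side is at least $|x(t_k)|>0$ by \eqref{eq:def_zeta}, a contradiction. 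Therefore $x(t)\neq0$ throughout $[t_k,t_{k+1})$ whenever $x(t_k)\neq0$, and $\phi$ is well defined on the whole interval. With this lemma inserted in place of the dichotomy you sketch, the rest of your argument for (a), and the remainder of the proof, goes through.
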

\begin{proof}
Let us show feasibility of \eqref{eq:SDP_triggering}. Take $\sigma=\sqrt{\mu/c}$
where $c>0$ is any sufficiently large constant such that $\Theta/2 - I_n/c \succ 0$ 
and $\Theta$ is $\Theta = S \Omega S \succ 0$, see \eqref{eq:NC} and below \eqref{eq:closed-loopV3}, or $\Theta = \beta S \succ 0$, see \eqref{eq:CONTR} and below \eqref{eq:closed-loopV3}. Such a $c$ exists because $\Theta \succ 0$.
For such a choice of $\sigma$, we have $-\mu \Theta/2 + \sigma^2 I_n = -\mu (\Theta/2-I_n/c) \prec 0$ 
for every $\mu>0$. Therefore, for this choice of $\sigma$, \eqref{eq:SDP_triggering}
is equivalent by a Schur complement to the two conditions
\begin{align*}
& -\mu (\Theta/2-I_n/c) \prec 0 \quad\text{and} \\
& -I_{s} + \mu (SX_1L)^\top (\Theta/2 - I_n/c)^{-1} (SX_1L) \preceq 0,
\end{align*}
which are jointly satisfied for $\mu$ sufficiently small. 

Let us prove item (a). Let $\mathcal{R}$ be any compact invariant set, and consider any
solution that starts in $\mathcal R$. 
The claim is trivial if $x(t_0)=0$. Conversely, pick any $x(t_k) \neq0$ where $k\in\mathbb{N}_0$. 
We show that if $x(t_k) \neq0$ then $x(t) \neq0$ for all $t \in [t_{k},t_{k+1})$. To see this, suppose by contradiction that $x(t_*)=0$ for some $t_* \in (t_{k},t_{k+1})$. Since $x$ is continuous on $(t_k,t_{k+1})$ and $\zeta$ is a smooth function satisfying 
\eqref{eq:def_zeta}-\eqref{prop-Q}, for each $\epsilon >0$ there exists $\delta \in (t_k,t_*)$ such that for each $t \in (\delta,t_*)$, $|x(t)|\leq \epsilon$ and $|\zeta(x(t))|\leq \epsilon$. Moreover, $|\zeta(x(t_k))| -|\zeta(x(t))| \leq |\zeta(x(t_k))-\zeta(x(t))| = |e(t)| \leq \sigma |x(t)|$ for all $t \in (\delta,t_*)$,
where the last inequality holds because $|e(t)| \leq \sigma |x(t)|$ as long as the solution exists.
Since $|x(t_k)| \leq |\zeta(x(t_k))|$, we thus have $
|x(t_k)| \leq |\zeta(x(t))| + \sigma |x(t)| \leq (1+\sigma) \epsilon$
for all $t \in (\delta,t_*)$; this leads to a contradiction as $\epsilon$ can be chosen arbitrarily small whereas $|x(t_k)| \neq 0$ is fixed.
This shows that if $x(t_k) \neq0$ then $x(t) \neq0$ for all $t \in [t_{k},t_{k+1})$, implying that $|e|/|x|$, used next in~\eqref{eq:Tab_MIET}, is well-defined over $[t_k,t_{k+1})$. As in \cite{Tabuada07}, we  bound inter-event times as
\begin{subequations}
\label{eq:Tab_MIET}
\begin{align}
& \frac{d}{dt} \frac{|e|}{|x|} = \frac{e^\top \dot{e}}{|e| |x|} - \frac{ |e| x^\top \dot{x}}{|x|^3} \leq  \frac{|\dot{\zeta}(x)|}{|x|} + \frac{ |e| |\dot{x}|}{|x|^2} \\
& \leq  \frac{\left\| \frac{\partial \zeta}{\partial x}(x) \right\| |\dot{x}|}{|x|} + \frac{ |e| |\dot{x}|}{|x|^2} 
\leq \ell_1 \left( 1+ \frac{|e|}{|x|} \right) \frac{|\dot{x}|}{|x|} \\
& =  \ell_1 \left( 1+ \frac{|e|}{|x|} \right) \frac{|(A+BK) \zeta(x) + BK e |}{|x|} \\
& \leq  \ell_2 \left( 1+ \frac{|e|}{|x|} \right) \frac{|\zeta(x)| + |e|}{|x|}  \leq  \ell \left( 1+ \frac{|e|}{|x|} \right)^2.
\end{align}
\end{subequations} 
where $\ell_1 := \max_{ x\in \mathcal{R}} \left\| \frac{\partial \zeta}{\partial x} (x)\right\|$ with $\ell_1\geq 1$ since $\left\| \frac{\partial \zeta}{\partial x} (x)\right\| \geq 1$ for all $x \in \mathbb{R}^n$ by~\eqref{eq:def_zeta}; $\ell_2 := \ell_1 \cdot \max \{ \|A+BK\|, \|BK\| \}$; $\ell := \ell_2 \cdot \sup_{x\in\mathcal{R} \backslash \{0\}} \frac{|\zeta(x)|}{|x|}$.
In particular, the last inequality follows because $\sup_{x\in \mathcal{R} \backslash \{0\}} \frac{|\zeta(x)|}{|x|} \geq 1$.
All these quantities exist and are finite since $\zeta$ is a smooth function, $\lim_{x \to 0} \frac{|\zeta(x)|}{|x|}=\lim_{x \to 0}  \sqrt{1+ \frac{|Q(x)|^2}{|x|^2}}
=\sqrt{1+ \lim_{x \to 0} \frac{|Q(x)|^2}{|x|^2}}=1$
and $\mathcal{R}$ is a compact set. Further, $\ell$ can be computed from data exploiting the fact that 
$A+BK=X_1G$ and $BK=X_1L$.

The rest of the proof is analogous to the model-based case, cf.~\cite{Tabuada07}.
Specifically, the time needed for $|e|/|x|$ to reach $\sigma$ is lower bounded by   
the time $\tau(\sigma)$ needed for $\phi$ to grow from $0$ (the value of $|e|/|x|$ right after each trigger) to $\sigma$, where $\phi$ is the solution to the differential equation $\dot \phi = \ell (1+\phi)^2$.
The expression of $\tau(\sigma)$ is given by $\tau(\sigma) := \frac{1}{\ell}  \frac{\sigma}{1+\sigma}$.
Thus, $t_{k+1}-t_k \geq \tau(\sigma)$. 
Since the solution $x$ and the instance $k$ are arbitrarily selected, $\tau(\sigma)$ is a minimum inter-event time valid over $\mathcal{R}$.

Let us prove item (b). As \eqref{eq:triggering} holds, 
$\nu^\top \Psi(\sigma) \nu \leq 0$ along the solutions to the closed-loop system. 
Combining this fact with \eqref{eq:closed-loopV4}, \eqref{eq:SDP_triggering_0} and \eqref{eq:SDP_triggering} we obtain
\begin{align}\label{eq:closed-loopV4a}
\frac{\partial V}{\partial x}(x) (X_1 G \zeta(x) + B K e) \le -\frac{1}{2} x^\top \Theta x + 2 x^\top \Phi Q(x)
\end{align}
for all $x \in \mathcal{Y}$, which is a nonempty set containing the origin since 
a stabilizing controller has been found by solving \eqref{eq:NC} or  \eqref{eq:CONTR}.
Now recall that $\Phi Q(x)$ is a higher-order term than $\Theta x$ for both the linearization method and contractive design (in the latter case, $\Phi=0$), and $\Theta \succ 0$.
Thus, $\mathcal{V}$ is nonempty; $\mathcal{W}:=\mathcal{V}\cap \mathcal{Y}$ is nonempty and $\mathcal{W} \cup \{0\}$ defines a neighbourhood of the origin. By standard Lyapunov arguments, any sublevel set 
$\mathcal{R}_\gamma$ included in $\mathcal{W} \cup \{0\}$
is an invariant set and an estimate of the basin of attraction. By item~(a), $\mathcal{R}_\gamma$ also has a guaranteed minimum inter-event time.
\end{proof}

Condition (\ref{eq:SDP_triggering}), which is linear in the  decision variables $\mu > 0$ and $\sigma^2>0$ and is always feasible, provides a data-based condition to design $\sigma$ so that the controlled system enjoys stability properties as well as a minimum inter-event time $\tau(\sigma)$. 
Condition (\ref{eq:SDP_triggering}) may be used to maximize $\sigma$ so to increase the inter-event times, see \cite[Proposition 3]{Postoyan_mystery} for further insights on the relation between the inter-event times and $\sigma$.

\subsection{Error-library threshold}
\label{sec:policy:error-library}

We now consider a sampling strategy ensuring $|e| \leq \sigma |\zeta(x)|$ for all times and discuss its advantages and disadvantages with respect to the error-state threshold. 
Similarly to before, let 
\begin{equation*}
\overline \Psi(\sigma) := \begin{bmatrix} -\sigma^2 I_s & 0 \\ 
0 & I_s \end{bmatrix} \text{ and }
\bar{\nu} := \begin{bmatrix} \zeta \\ e \end{bmatrix}
\end{equation*}
with $\sigma > 0$ a design parameter to be determined.
The sampling times are defined as
\begin{equation} \label{eq:triggering2}
t_{k+1} := 
\begin{cases}
\inf \{t \in \mathbb R\colon t > t_k \text{ and } \bar{\nu}(t)^\top \overline \Psi(\sigma) \bar{\nu}(t) = 0  \} & \text{if } x(t_k) \neq 0\\
+ \infty & \text{otherwise}
\end{cases}
\end{equation}
with $t_0$ given. 
As before, this logic ensures by design that $\bar{\nu}^\top \overline\Psi(\sigma) \bar{\nu} \leq 0$ along the solutions to~\eqref{eq:closed-loop}, \eqref{eq:e}, \eqref{eq:triggering2} as long as they exist, and we shall prove later that the sequence of sampling instants does not result in an accumulation point, which guarantees that a solution to \eqref{eq:closed-loop}, \eqref{eq:e}, \eqref{eq:triggering2} exists for all times (and is unique).
In Section~\ref{sec:discussion} we discuss how \eqref{eq:triggering2} could be modified when the dynamics is perturbed by noise during closed-loop execution.

Let $\eta > 0$ be any scalar selected by the designer.
Again, \eqref{eq:closed-loopV3} is equivalent to
\begin{align}\label{eq:closed-loopV2a}
& \frac{\partial V}{\partial x}(x) (X_1 G \zeta(x) +  B K e) \nonumber  \\
& \quad  \le -\frac{1}{2} x^\top \Theta x + \eta  Q(x)^\top Q(x) + 2 x^\top \Phi Q(x) \nonumber  \\
& \qquad + 
\left[
\begin{array}{c}
\zeta \\ \hline e
\end{array}
\right]^\top
\underbrace{
\left[ \begin{array}{cc|c} -\frac{1}{2} \Theta & 0 & S X_1 L \\ 
0 & -\eta I_{s-n} & 0 \\ \hline
(S X_1 L)^\top & 0 &0 \end{array} \right]
}_{=: \overline M}
\left[
\begin{array}{c}
\zeta \\ \hline e
\end{array}
\right]
\end{align}
for all $x \in \mathcal{Y}$, where $L$ is any solution to \eqref{eq:L}. 
Hence, asymptotic stability of the origin of the closed-loop system is ensured if we select $\sigma$ to satisfy, for all $\bar{\nu} \in\R^{2s}$, the implication
\begin{equation} \label{eq:SDP_triggering_0_bar}
\bar{\nu}^\top \overline \Psi(\sigma) \bar{\nu} \leq 0 \implies  \bar{\nu}^\top \overline M \bar{\nu} \le 0.
\end{equation}

\begin{theorem} \label{thm:exact_sampling2}
Let Assumptions \ref{ass:library}-\ref{ass:rich} hold and consider the closed-loop system \eqref{eq:closed-loop}, \eqref{eq:e}, \eqref{eq:triggering2}, where $K$ in \eqref{eq:closed-loop} is any solution to~\eqref{eq:NC} or \eqref{eq:CONTR}. 
Let $\mu >0$, $\sigma>0$ be a solution to the linear matrix inequality
\begin{equation}
\label{eq:SDP_triggering2}
\mu \overline M - \overline \Psi(\sigma) \preceq 0
\end{equation}
in the decision variables $\mu > 0$ and $\sigma^2$, which is always feasible.
Then, the following holds.
\begin{itemize}[noitemsep,nosep,wide,topsep=-10pt]
\item[(a)] For any compact invariant set $\mathcal{R}$ there exists a \emph{minimum inter-event time,} i.e., for any solution $x$ with $x(0) \in \mathcal{R}$, the sequence of sampling instants 
satisfies $t_{k+1}-t_k \geq \tau(\sigma)$ for every $k \in \mathcal{N} \subseteq \mathbb N_0$, where $\tau(\sigma):=\frac{1}{\omega} \frac{\sigma}{1+\sigma}$ for some constant $\omega$ computed from data alone, as specified in the proof. 
\item[(b)] Let
\begin{align}
& \!\mathcal{V} \!:= \!\{x\colon \! -\tfrac{1}{2} x^\top \Theta x + \eta Q(x)^\top Q(x) + 2x^\top \Phi Q(x) < 0\}, \notag \\
& \!\mathcal{W} := \mathcal{V} \cap \mathcal{Y}\label{def_set_V_set_W_2}
\end{align}
with $\mathcal{Y}$ the set where \eqref{eq:closed-loopV3} holds.
Then, the origin of the closed-loop system is asymptotically stable and any set%
\begin{align}
\label{def_set_Rgamma_2}
\mathcal{R}_\gamma := \{x\colon x^\top S x \leq \gamma \}
\end{align}
included in $\mathcal{W} \cup \{0\}$
is an invariant set and an estimate of the basin of attraction.
\end{itemize}
\end{theorem}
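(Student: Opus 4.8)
The plan is to follow the proof of Theorem~\ref{thm:exact_sampling} almost verbatim, substituting the library vector $\zeta(x)$ for the state $x$ in the triggering quantities, $\overline M$ for $M$ and $\overline\Psi(\sigma)$ for $\Psi(\sigma)$, and carrying along the extra term $\eta Q(x)^\top Q(x)$. Feasibility of \eqref{eq:SDP_triggering2} is shown exactly as for \eqref{eq:SDP_triggering}: set $\sigma^2=\mu/c$ with $c>0$ large enough that $\Theta/2-I_n/c\succ0$ and $1/c<\eta$, both possible since $\Theta\succ0$ and $\eta>0$. Then the $\zeta$-block of $\mu\overline M-\overline\Psi(\sigma)$ equals $\mathrm{diag}\!\big({-}\mu(\Theta/2-I_n/c),\ \mu(1/c-\eta)I_{s-n}\big)\prec0$ for every $\mu>0$, while the $e$-block is $-I_s\prec0$, so a Schur complement on the $e$-block reduces \eqref{eq:SDP_triggering2} to the pair of conditions $\mu(1/c-\eta)I_{s-n}\preceq0$ (automatic) and $\mu\,(SX_1L)(SX_1L)^\top\preceq\Theta/2-I_n/c$, the latter holding for $\mu$ small ($Q$ does not couple to $e$ in $\overline M$). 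As in Theorem~\ref{thm:exact_sampling}, since $\mu>0$ the inequality \eqref{eq:SDP_triggering2} implies the S-procedure-type implication \eqref{eq:SDP_triggering_0_bar}.

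For item (a), let $\mathcal R$ be compact and invariant and take a solution with $x(0)\in\mathcal R$, so $x(t)\in\mathcal R$ for all $t$. The claim is trivial if $x(t_k)=0$; if $x(t_k)\neq0$ I would first show $x(t)\neq0$ on $[t_k,t_{k+1})$: supposing $x(t_*)=0$, continuity of $\zeta(x(\cdot))$ together with $|\zeta(x(t_k))|-|\zeta(x(t))|\le|e(t)|\le\sigma|\zeta(x(t))|$ from \eqref{eq:triggering2} gives $|\zeta(x(t_k))|\le(1+\sigma)|\zeta(x(t))|\to0$, contradicting $|\zeta(x(t_k))|\ge|x(t_k)|>0$ (recall $\zeta(x)=(x,Q(x))$). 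Hence $\phi:=|e|/|\zeta(x)|$ is well defined on $[t_k,t_{k+1})$, vanishes right after each trigger, and, using $\dot e=-\frac{\partial\zeta}{\partial x}(x)\dot x$ and $\dot x=X_1 G\zeta(x)+X_1 L e$ (with $BK=X_1L$), satisfies $\dot\phi\le(|\dot\zeta(x)|/|\zeta(x)|)(1+\phi)$. Bounding $|\dot\zeta(x)|\le\ell_1|\dot x|$ with $\ell_1:=\max_{x\in\mathcal R}\|\frac{\partial\zeta}{\partial x}(x)\|$, and $|\dot x|\le\max\{\|X_1 G\|,\|X_1 L\|\}\,(|\zeta(x)|+|e|)$, yields $\dot\phi\le\omega(1+\phi)^2$ with $\omega:=\ell_1\max\{\|X_1 G\|,\|X_1 L\|\}$, all finite over the compact $\mathcal R$ and computable from data since $G$, $L$, $\zeta$ are known. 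A comparison argument against $\dot\phi=\omega(1+\phi)^2$, $\phi(0)=0$, then gives $t_{k+1}-t_k\ge\tau(\sigma)=\frac1\omega\frac{\sigma}{1+\sigma}$, uniformly in $k$ and in the initial condition in $\mathcal R$; this step is in fact slightly cleaner than in Theorem~\ref{thm:exact_sampling}, since working directly with $|\zeta(x)|$ removes the auxiliary factor $\sup_x|\zeta(x)|/|x|$.

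For item (b), along closed-loop solutions \eqref{eq:triggering2} enforces $\bar\nu^\top\overline\Psi(\sigma)\bar\nu\le0$, hence $\bar\nu^\top\overline M\bar\nu\le0$ by \eqref{eq:SDP_triggering_0_bar}; inserting this into \eqref{eq:closed-loopV2a} gives $\frac{\partial V}{\partial x}(x)(X_1 G\zeta(x)+BKe)\le-\frac12 x^\top\Theta x+\eta Q(x)^\top Q(x)+2x^\top\Phi Q(x)$ for all $x\in\mathcal Y$. Since $\Theta\succ0$ and both $\eta Q(x)^\top Q(x)$ and $2x^\top\Phi Q(x)$ are higher-order than $x^\top\Theta x$ near the origin by \eqref{prop-Q} (with $\Phi=0$ in the contractive case), the set $\mathcal V$ in \eqref{def_set_V_set_W_2} contains a punctured neighbourhood of $0$, so $\mathcal W\cup\{0\}=(\mathcal V\cap\mathcal Y)\cup\{0\}$ is a neighbourhood of $0$ on which $\dot V<0$ away from the origin. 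Standard Lyapunov arguments then show that any $\mathcal R_\gamma=\{x\colon x^\top S x\le\gamma\}$ contained in $\mathcal W\cup\{0\}$ is forward invariant, precludes finite escape, and is an estimate of the basin of attraction; applying item (a) with $\mathcal R=\mathcal R_\gamma$ (compact and invariant) supplies the minimum inter-event time over $\mathcal R_\gamma$, which in particular excludes Zeno, so solutions from $\mathcal R_\gamma$ are complete.

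The main obstacle I anticipate is precisely the one new feature relative to Theorem~\ref{thm:exact_sampling}: ensuring that the additional term $\eta Q(x)^\top Q(x)$ does not destroy negativity of $-\frac12 x^\top\Theta x$ near the origin, i.e.\ that $\mathcal W\cup\{0\}$ is still a genuine neighbourhood of $0$. This is resolved by \eqref{prop-Q}, which makes $|Q(x)|^2=o(|x|^2)$, just as the higher-order property of $x^\top\Phi Q(x)$ was used in Theorem~\ref{thm:exact_sampling}; everything else is a routine transcription of the error-state argument, with the only bookkeeping change that the inter-event estimate is phrased in terms of $|\zeta(x)|$ rather than $|x|$.
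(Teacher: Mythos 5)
Your proposal is correct and follows essentially the same route as the paper's proof: the same feasibility argument via $\sigma^2=\mu/c$ and a Schur complement (you merely complement on the $e$-block instead of the $\zeta$-block), the same $|e|/|\zeta(x)|$ comparison argument with the same constant $\omega=\ell_1\max\{\|X_1G\|,\|X_1L\|\}$, and the same Lyapunov/S-procedure step for item (b). Your direct non-vanishing argument for $\zeta(x(t))$ on $[t_k,t_{k+1})$ and your explicit remark that $\eta Q(x)^\top Q(x)=o(|x|^2)$ by \eqref{prop-Q} are minor self-contained refinements of steps the paper handles by reference to Theorem~\ref{thm:exact_sampling}.
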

\begin{proof}
The proof follows the same steps as the proof of Theorem~\ref{thm:exact_sampling} but we report the details for completeness. 
Let us show the feasibility of \eqref{eq:SDP_triggering2}. 
Take $\sigma =\sqrt{\mu/c}$
where $c>0$ is any sufficiently large constant such that
\begin{equation*}
\bmat{\frac{1}{2} \Theta & 0 \\ 
0 &\eta I_{s-n}}  - \frac{1}{c} I_s
=: W - \frac{1}{c} I_s \succ 0
\end{equation*}
where $\Theta$ is $\Theta = S \Omega S \succ 0$, see \eqref{eq:NC} and below \eqref{eq:closed-loopV3}, or $\Theta = \beta S \succ 0$, see \eqref{eq:CONTR} and below \eqref{eq:closed-loopV3}.
Such a $c$ exists because $\Theta \succ 0$ and 
$\eta>0$.
For such a choice of $\sigma$, we have $-\mu W + \sigma^2 I_s = -\mu (W-I_s/c) \prec 0$ for every $\mu>0$. Therefore, for this choice of $\sigma$, \eqref{eq:SDP_triggering2}
is equivalent by a Schur complement to the two conditions
\begin{align*}
& -\mu (W-I_s/c) \prec 0 \quad \text{and} \\
& -I_s + \mu \begin{bmatrix} SX_1L \\ 0 \end{bmatrix}^\top (W - I_s/c)^{-1} \begin{bmatrix} SX_1L \\ 0 \end{bmatrix} \preceq 0,
\end{align*}
which are jointly satisfied for $\mu$ sufficiently small. 

Let us prove item (a). Let $\mathcal{R}$ be any compact invariant set, and consider any
solution that starts in $\mathcal R$. The claim is trivial if $x(t_0)=0$. Conversely, pick any $x(t_k) \neq0$ where $k\in\mathbb{N}_0$ and note that $x(t_k) \neq0$ implies $x(t) \neq0$ for all $t\in [t_k,t_{k+1})$ analogously as proved for Theorem~\ref{thm:exact_sampling}. In turn, this implies $\zeta(t) \!\neq\!0$ for all $t\in [t_k,t_{k+1})$ by~\eqref{eq:def_zeta}. 
As in \cite{Tabuada07}, we bound inter-event times as
\begin{align*}
& \frac{d}{dt} \frac{|e|}{|\zeta(x)|} =  \frac{e^\top \dot{e}}{|e| |\zeta(x)|} - \frac{ |e| \zeta(x)^\top \dot{\zeta}(x)}{|\zeta(x)|^3} \\
& \le \frac{|\dot{\zeta}(x)|}{|\zeta(x)|} + \frac{ |e| |\dot{\zeta}(x)|}{|\zeta(x)|^2} 
\le \ell_1 \left( 1+ \frac{|e|}{|\zeta(x)|} \right) \frac{|\dot{x}|}{|\zeta(x)|} \\
& = \ell_1 \left( 1+ \frac{|e|}{|\zeta(x)|} \right) \frac{|(A+BK) \zeta(x) + BK e |}{|\zeta(x)|} \\
& \le \omega \left( 1+ \frac{|e|}{|\zeta(x)|} \right) \frac{|\zeta(x)| + |e|}{|\zeta(x)|} = \omega \left( 1+ \frac{|e|}{|\zeta(x)|} \right)^2
\end{align*}
where $\ell_1 := \max_{ x\in \mathcal{R}} \big\| \frac{\partial \zeta}{\partial x}(x)\big\|$ with $\ell_1\geq 1$ since $\big\| \frac{\partial \zeta}{\partial x}(x) \big\| \geq 1$ for all $x\in\mathbb{R}^n$ by~\eqref{eq:def_zeta}; $\omega := \ell_1 \cdot \max \{ \|A+BK\|, \|BK\| \}$. As in the proof of Theorem \ref{thm:exact_sampling}, all these quantities exist and are finite since $\zeta$ is a smooth function and $\mathcal{R}$ is a compact set. Further, $\omega$ can be computed from data exploiting the fact that 
$A+BK=X_1G$ and that $BK=X_1L$.

Therefore, the time needed for $|e|/|\zeta(x)|$ to reach $\sigma$ is lower bounded by   
the time $\tau(\sigma)$ needed for $\phi$ to grow from $0$ (the value of $|e|/|\zeta(x)|$ right after each trigger) to $\sigma$, where $\phi$ is the solution to
the differential equation $\dot \phi = \omega (1+\phi)^2$.
The expression of $\tau(\sigma)$ is given by $\tau(\sigma) := \frac{1}{\omega}  \frac{\sigma}{1+\sigma}$.
Thus, $t_{k+1}-t_k \geq \tau(\sigma)$. Since the solution $x$ and the instance $k$ are arbitrarily selected, $\tau(\sigma)$ is a minimum inter-event time valid over $\mathcal{R}$. 

Let us prove item (b). By \eqref{eq:triggering2}, we have
$\bar{\nu}^\top \overline{\Psi}(\sigma) \bar{\nu} \leq 0$ along the solutions to the closed-loop system. Combining this fact with \eqref{eq:closed-loopV2a}, \eqref{eq:SDP_triggering_0_bar} and \eqref{eq:SDP_triggering2} we obtain
\begin{align}
&\frac{\partial V}{\partial x}(x) (X_1 G \zeta(x) +  B K e) \notag \\ 
&\quad \leq -\frac{1}{2} x^\top \Theta x + \eta Q(x)^\top Q(x) + 2 x^\top \Phi Q(x) \label{eq:closed-loopV4ab}
\end{align}
for all $x \in \mathcal{Y}$, which is a nonempty set containing the origin as we are assuming that a stabilizing controller has been found by solving \eqref{eq:NC} or  \eqref{eq:CONTR}. 
Now recall that $\Phi Q(x)$ is a higher-order term than $\Theta x$ and $\Theta \succ 0$. 
Thus, $\mathcal{V}$ is nonempty; $\mathcal{W}:=\mathcal{V}\cap \mathcal{Y}$ is nonempty and $\mathcal{W} \cup \{0\}$ defines a neighbourhood of the origin. By standard Lyapunov arguments, any sublevel set 
$\mathcal{R}_\gamma$ included in $\mathcal{W} \cup \{0\}$
is an invariant set and an estimate of the basin of attraction. 
By item~(a), $\mathcal{R}_\gamma$ also has a guaranteed minimum inter-event time.
\end{proof}

\subsection{Discussion}
\label{sec:discussion}

Let us discuss some points, among which the extensions of the results to the case of noisy measurements and neglected nonlinearities, which were not carried out on purpose to avoid distracting the reader from the main contribution of the article, namely, new data-based event-triggered schemes for nonlinear systems that do not require an ISS assumption. We envision that the analysis of these more complex scenarios can be carried out at the expense of simplicity.  

\subsubsection*{Noisy data and neglected nonlinearities}

Assuming noise-free data is unrealistic. However, dealing with noisy data appears to be feasible as  both the linearization method and the contractive design apply in the case of noisy data, see \cite[Theorem 6]{derotte2023} for the linearization method and \cite[Theorem 2]{hu2024enforcing} for the contractive design.
Similar considerations apply in the case of neglected nonlinearities, which the linearization method analyzes in \cite[Theorem 6]{derotte2023} and the contractive design in \cite[Theorem~3]{hu2024enforcing}.

\subsection*{Perturbed closed-loop dynamics}

When noise perturbs the dynamics not only during data collection but also during the closed-loop execution of the event-triggered control law, the setup becomes more challenging in terms of, e.g., guaranteeing the existence of a global minimum inter-event time with triggering rules like \eqref{eq:triggering} and \eqref{eq:triggering2}.
We believe that the proposed approach can be extended to the case of perturbed closed-loop dynamics by adopting the so-called mixed triggering condition considered in \cite[(27)]{deposte2023}. With the arguments in \cite[Thm.~4]{deposte2023}, such an extension would allow one to establish the existence of a robust positively invariant set for the closed-loop system. A detailed treatment of this extension is left for future work.
On the other hand, when the closed-loop dynamics is not perturbed, Theorems \ref{thm:exact_sampling} and \ref{thm:exact_sampling2} achieve the stronger property of regional asymptotic stabilization of the origin.

\subsubsection*{State derivatives}

To avoid the computation of state derivatives one could consider to take integral versions of the measurements as in \cite[Appendix A]{deposte2023}. Recently, novel data-driven derivative-free control techniques have been proposed in \cite{BOSSO2025101309,possieri2025derivative}, which may also find application in the context of this article.

\subsubsection*{Error-state threshold vs.~error-library threshold}

Compared with the error-state threshold, this strategy can achieve larger inter-event times. 
In fact, when considering the same $\mathcal{R}$, the constant $\omega$ of Theorem~\ref{thm:exact_sampling2} that controls the minimum inter-event time is related to the constant $\ell$ of Theorem~\ref{thm:exact_sampling} as $\ell=\omega \cdot \sup_{x\in\mathcal{R} \backslash \{0\}} |\zeta(x)|/|x|$ and, therefore, $\ell \geq \omega$ because $\zeta(x)$ also includes $x$ in the library, see~\eqref{eq:def_zeta}.
On the other hand, this strategy achieves worse control performance compared with the error-state threshold and this can be seen from the expression that describes the evolution of the Lyapunov function. 
In fact, with the second strategy we have an extra term $\eta Q(x)^\top Q(x)$ and this term negatively impacts the convergence rate and the size of the estimate of the basin of attraction.

\section{Numerical examples}
\label{sec:sim}

We numerically test our event-triggered designs, which yield asymptotic stability and an estimate of the basin of attraction in the first example (Section~\ref{sec:sim-poly}) and global asymptotic stability in the second one (Section~\ref{sec:sim-pend}).

\subsection{A polynomial system} \label{sec:sim-poly}

Consider the system from \cite[Example 14.11]{khalil2002nonlinear} given by
\begin{equation}\label{ex1}
\begin{bmatrix} \dot{x}_1\\\dot{x}_2
\end{bmatrix}=\begin{bmatrix}-x_1+x_1^2x_2\\0\end{bmatrix} + 
\begin{bmatrix} 0 \\ 1 \end{bmatrix} u,
\end{equation}
which has polynomial dynamics.
Suppose $\zeta(x)=(x_1,x_2,x_1^2,x_1^2x_2,x_1x_2^2,x_2^3)$, which is selected to emphasize that the library of functions can include functions that are \emph{not} present in the system dynamics. 
Accordingly, we write \eqref{ex1} as $\dot{x} = A\zeta(x)+Bu = \smat{-1&0&0&1&0&0\\0&0&0&0&0&0} \zeta(x)+ \smat{0\\1} u$.
We collect data by running an experiment of $1$ time unit with sampling period $0.1$, with initial conditions in the set $[-1,1]\times [-1,1]$ and the input generated by linear interpolation of samples of a random variable uniformly distributed in $[-20,20]$. 

For brevity, we consider the contractive design of Section~\ref{sec:contractive}, with the quantities in~\eqref{quantities_for_2_designs}.
Similar results can be obtained with the linearization method of Section~\ref{sec:linearization}. 
From the choice of $\zeta$, it follows $Q(x)=(x_1^2,x_1^2x_2,x_1x_2^2,x_2^3)$, thus
\begin{align*}
\frac{\partial Q}{\partial x}(x)^\top\frac{\partial Q}{\partial x}(x)& =
\bmat{4x_1^2+4 x_{1}^2 x_2^2+x_2^4 &2x_1^3x_2+2x_1x_2^3\\2x_1^3x_2+2x_1x_2^3&x_1^4+4x_1^2x_2^2+9 x_2^4} \\
& \preceq \bmat{0.2872&0\\0&0.0931}
\end{align*}
for all $x\in \mathcal{X}=\{x\in\mathbb{R}^2\colon |x_1|\leq\frac{1}{4},|x_2|\leq\frac{1}{4}\}$. 
Using $R_Q=\left[\begin{smallmatrix}
    0.2872&0.0000\\0.0000&0.0931
\end{smallmatrix}\right]^\frac{1}{2}$, see \eqref{asspt}, we solve \eqref{eq:CONTR}
and obtain $K=\smat{0.0008&-4.1514&0.0000 &-0.0013&0.0000  & 0.0000}$ and the Lyapunov function $V(x)=x^\top S x= x^\top \smat{0.3246 &0.0000\\0.0000&0.0327} x$.

\begin{figure}
\centerline{\includegraphics[width=0.85\linewidth]{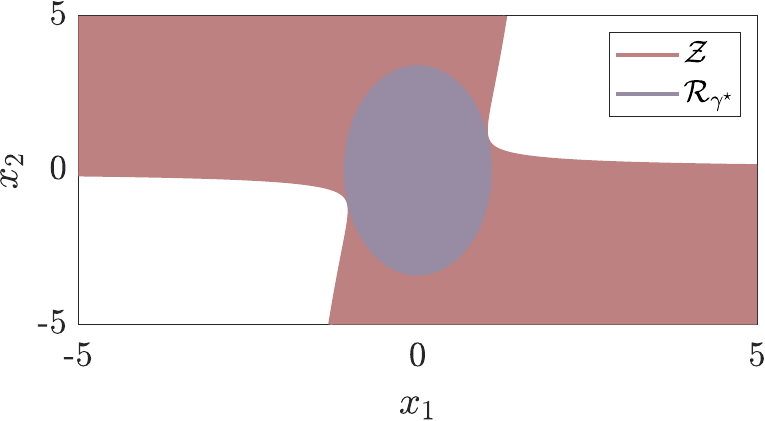}\vspace*{-3mm}}
\caption{The set $\mathcal{Z}$ and the largest estimate sub-level set $\mathcal{R}_{\gamma^\star}$ included in $\mathcal{Z}$.}\label{fig:error_state_set_Contraction}
\end{figure}

\subsubsection{The triggering policy with error-state threshold}

For the controller obtained with the contractive design, $\mathcal{V}$ in~\eqref{def_set_V_set_W} satisfies $\mathcal{V}=\left\{x\colon -\frac{1}{2}x^\top\Theta x<0\right\}=\R^2\setminus\{0\}$. 
In this case $\mathcal{X}\subseteq \mathcal{V}\cup \{0\}$, hence $\mathcal{W}$ in~\eqref{def_set_V_set_W} satisfies 
$\mathcal{W}\cup \{0\}=\mathcal{X}$ and any sublevel set $\mathcal{R}_\gamma$ of the Lyapunov function $V$ included in $\mathcal{X}$, see \eqref{def_set_Rgamma}, is an estimate of the basin of attraction (BOA), as certified by Theorem~\ref{thm:exact_sampling}. 
Actually, for $\mathcal{Z}:=\left\{x\colon\frac{\partial V}{\partial x}(x) X_1 G \zeta(x)  \le -x^\top \Theta x\right\}$, any sublevel set $\mathcal{R}_\gamma$ of the Lyapunov function $V$ included in $\mathcal{Z}$ is an estimate of the BOA\footnote{Recall that \eqref{eq:triggering} ensures $|e|\le \sigma |x|$ along solutions to the event-triggered closed loop.
From the proof of Theorem \ref{thm:exact_sampling}, see \eqref{eq:closed-loopV4a}, the derivative of $V$ along solutions to the event-triggered closed loop satisfies $\frac{\partial V}{\partial x} (x)(X_1 G \zeta(x) +  B K e) \le -\frac{1}{2}x^\top \Theta x$ for all $x\in \mathcal{X}$; hence, $\mathcal{R}_\gamma \subseteq \mathcal{X}$ is an estimate of the BOA. 
This follows from
\begin{align}
\label{set_incl_X_Z}
\frac{\partial V}{\partial x}(x) X_1 G \zeta(x) \le -x^\top \Theta x \quad \forall x\in \mathcal{X},
\end{align}
by~\eqref{dotValongSolContr}, and $-\frac{1}{2}x^\top \Theta x+2 x^\top S BK e\le 0$ for all $\smat{x\\e}$ with $|e|\le \sigma |x|$, by \eqref{eq:closed-loopV4}, \eqref{eq:SDP_triggering_0}, \eqref{eq:SDP_triggering}.
On the other hand, $\frac{\partial V}{\partial x}(x) X_1 G \zeta(x)  \le -x^\top \Theta x$ for all $x\in \mathcal{Z}$ and, thus, $\mathcal{R}_\gamma \subseteq \mathcal{Z}$ is also an estimate of the BOA.}.
Since $\mathcal{X}\subseteq \mathcal{Z}$ by~\eqref{set_incl_X_Z}, finding $\mathcal{R}_\gamma \subseteq \mathcal{Z}$ leads to a  larger estimate of the BOA than finding $\mathcal{R}_\gamma \subseteq \mathcal{X}$.
The largest $\mathcal{R}_\gamma \subseteq \mathcal{Z}$ is found numerically to correspond to $\gamma^\star= 0.365$ and, as just shown, is an estimate of the BOA.
The two sets $\mathcal{Z}$ and $\mathcal{R}_{\gamma^\star}$ are displayed in Fig.~\ref{fig:error_state_set_Contraction}.
We solve \eqref{eq:SDP_triggering}, 
which returns $\mu=0.3050$ and $\sigma=0.0402$. 
The resulting minimum inter-event time computed according to Theorem~\ref{thm:exact_sampling} is $\tau(\sigma)=1.2953\cdot 10^{-3}$. 
Fig.~\ref{fig:error_state_ETM_Contraction} reports the evolutions of $x(\cdot)$, $|e(\cdot)|$, $\sigma |x(\cdot)|$ and $\{t_{k+1}-t_k\}_{ k\in \mathcal{N}}$; the minimum inter-event time in \emph{this} simulation is $2.9829\cdot 10^{-3}$, which is larger but comparable with the guaranteed $\tau(\sigma)$.

\begin{figure*}
\centerline{
\includegraphics[width=0.31\linewidth]{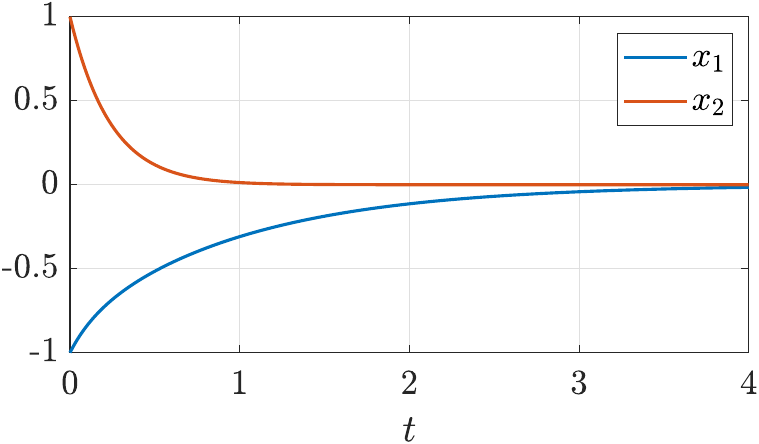}~
\includegraphics[width=0.31\linewidth]{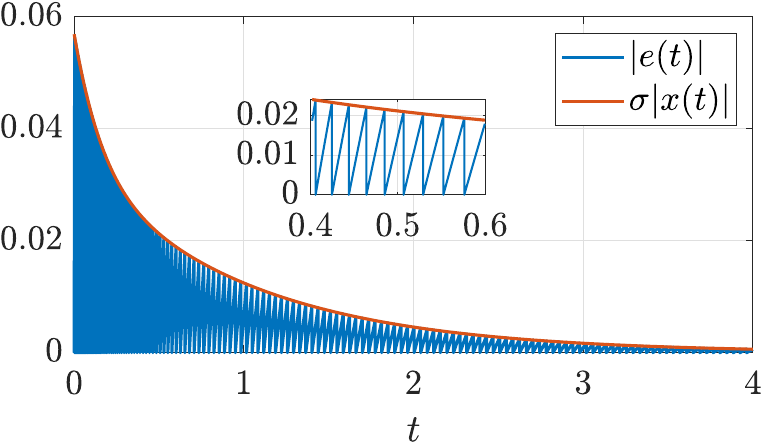}~
\includegraphics[width=0.31\linewidth]{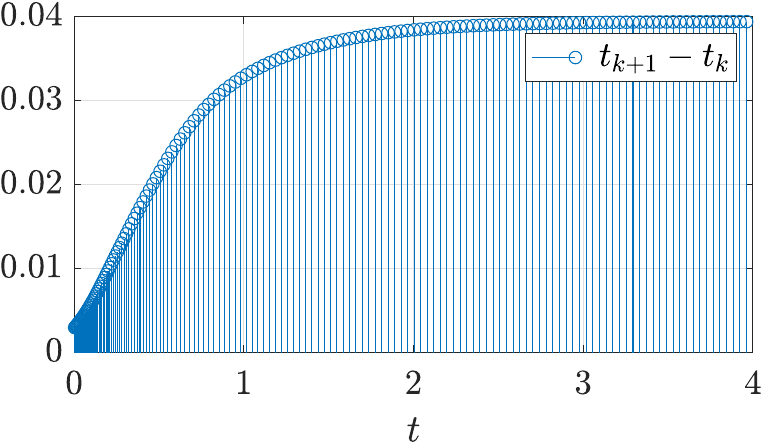}\vspace*{-3mm}
}
\caption{Left: state solutions. Middle: evolution of $|e(\cdot)|$ and $\sigma|x(\cdot)|$. Right: triggering inter-event times.}\label{fig:error_state_ETM_Contraction}
\end{figure*}

\begin{figure}
\centerline{\includegraphics[width=0.85\linewidth]{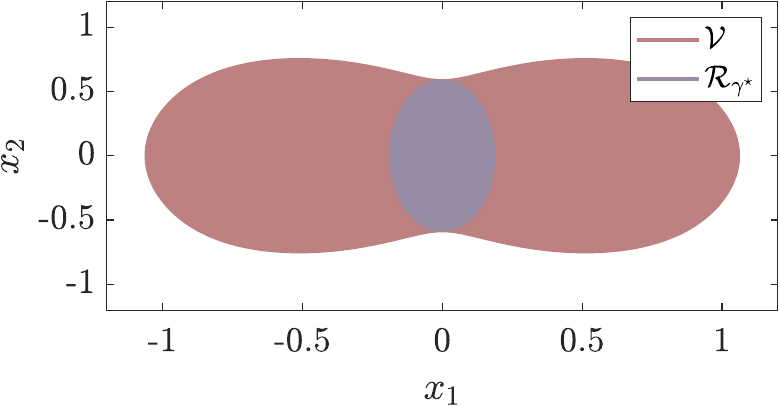}\vspace*{-3mm}}
\caption{
The set $\mathcal{V}$ and the largest estimate sub-level set $\mathcal{R}_{\gamma^\star}$ included in $\mathcal{V}$. }
\label{fig:error_library_set_contraction}
\end{figure}

\begin{figure*}
\centerline{
\includegraphics[width=0.30\linewidth]{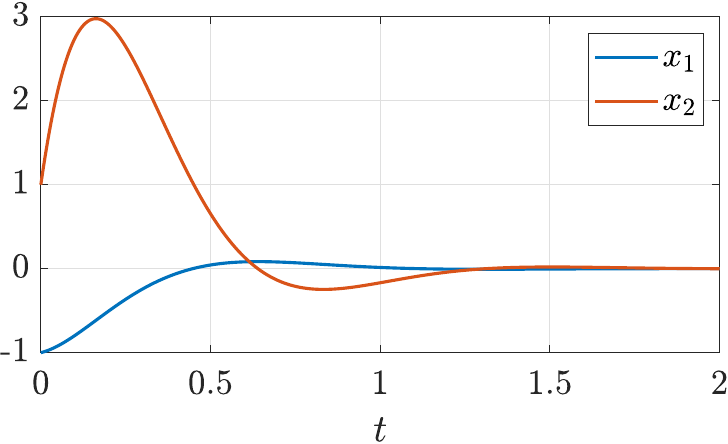}~
\includegraphics[width=0.32\linewidth]{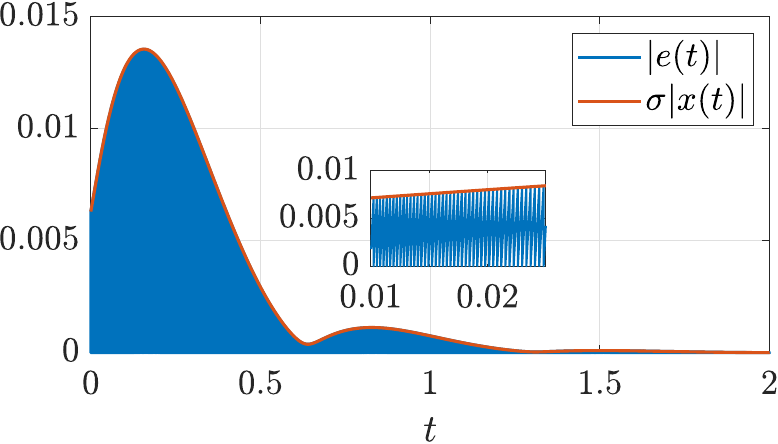}~
\includegraphics[width=0.30\linewidth]{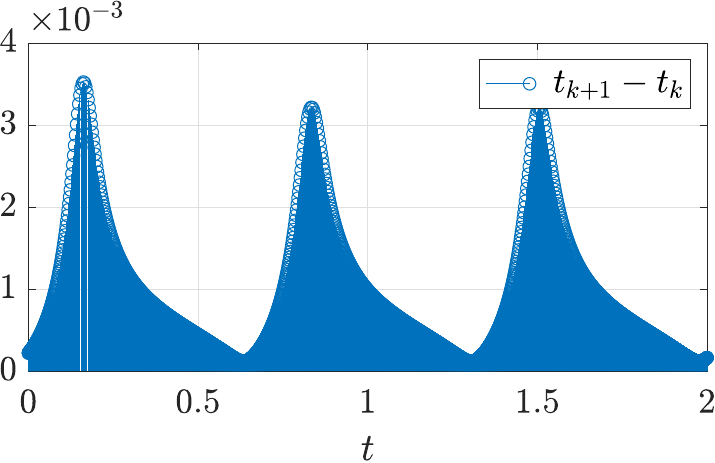}\vspace*{-3mm}
}
\caption{Left: state solutions. Middle: evolution of $|e(\cdot)|$ and $\sigma|x(\cdot)|$. Right: triggering inter-event times.}
\label{fig:inverted_pen_error_state_ETM_Contraction}
\end{figure*}

\subsubsection{The triggering policy with error-library threshold}

For the controller obtained with the contractive design, $\mathcal{V}$ in~\eqref{def_set_V_set_W_2} reduces to $\{x\colon-\frac{1}{2}x^\top\Theta x+\eta Q(x)^\top Q(x)<0\}$ and we select the design parameter $\eta=0.1$.
We check that $\mathcal{X}\subseteq \mathcal{V}\cup\{0\}$, hence 
$\mathcal{W}\cup \{0\}=\mathcal{X}$ from~\eqref{def_set_V_set_W_2} and any sublevel set $\mathcal{R}_\gamma$ of the Lyapunov function $V$ included in $\mathcal{X}$ is an estimate of the BOA, as certified by Theorem~\ref{thm:exact_sampling2}. 
Actually, any sublevel set $\mathcal{R}_\gamma$ of the Lyapunov function $V$ included in $\mathcal{V}\cup \{0\}$ is an estimate of the BOA by~\eqref{def_set_Rgamma_2}.
The largest $\mathcal{R}_\gamma \subseteq \mathcal{V}\cup \{0\}$ is found numerically to correspond to $\gamma^\star=0.0108$. 
The two sets $\mathcal{V}\cup \{0 \}$ and $\mathcal{R}_{\gamma^\star}$ are displayed in Fig.~\ref{fig:error_library_set_contraction}.
We solve \eqref{eq:SDP_triggering2}, which returns $\mu=0.3051$ and $\sigma=0.0402$.
The minimum inter-event time computed according to Theorem~\ref{thm:exact_sampling2} is $\tau(\sigma)=2.2435\cdot 10^{-3}$.
The evolutions of $x(\cdot)$, $|e(\cdot)|$, $\sigma |\zeta(x(\cdot))|$ and $\{t_{k+1}-t_k\}_{ k\in \mathcal{N}}$ are quite similar to those in Fig.~\ref{fig:error_state_ETM_Contraction}, and are thus not reported; the minimum inter-event time in this simulation is $5.0764\cdot 10^{-3}$. 
Overall, the policy with error-library threshold achieves, with respect to that with error-state threshold, a 1.7-fold increment of the minimum inter-event time but a smaller estimate of the BOA.
Despite this, the corresponding approach in Section~\ref{sec:policy:error-library} is relevant per se, beyond the specific numerical example.

\subsection{Inverted pendulum} \label{sec:sim-pend}

As second example, consider an inverted pendulum given by
\begin{equation}\label{ex3}
\dot{x}_1=x_2, \quad \dot{x}_2=\frac{g}{l}\sin x_1-\frac{h}{ml^2}x_2+\frac{1}{ml^2}u
\end{equation}
where $m$ is the mass of the pendulum, $l$ is the distance from the base to the center of mass of the pendulum, $h$ is the coefficient of rotational friction, and $g$ is the acceleration due to gravity. 
To generate the data points, we take $m=1$, $l=1$, $g=9.8$, and $h=0.01$. 
Suppose $\zeta(x)=(x_1,x_2,\sin x_1)$ so that \eqref{ex3} rewrites as $\dot{x}=\smat{0&1&0\\0&-0.01 &9.8} \zeta(x) + \smat{0\\1} u$.
We collect data by running an experiment of $1$ time unit with sampling period $0.1$, with initial conditions in the set $[-1,1] \times [-1,1]$ and the input generated by a linear interpolation of samples of a random variable uniformly distributed in $[-1,1]$. 
As in Section \ref{sec:sim-poly}, we only consider the contraction design of Section~\ref{sec:contractive}. From the choice of $\zeta$, it follows $Q(x)=\sin x_1$, thus $\frac{\partial Q}{\partial x}(x)^\top \frac{\partial Q}{\partial x}(x)
=
\smat{(\cos x_1)^2 &0\\0&0}
\preceq
\smat{
1 &0\\0&0
}$ for all $x\in\mathcal{X}=\mathbb{R}^2$.
Using $R_Q=\smat{1&0\\0&0}$, we solve \eqref{eq:CONTR} and obtain 
$K=\left[\begin{smallmatrix}-35.7625&  -7.4219&-9.6214\end{smallmatrix}\right]$.
For this example, the origin of the closed-loop system \eqref{ex3} with $u=K \zeta(x)$ is globally exponentially stable, by Proposition~\ref{prop:contractivity}.

\subsubsection{The triggering policy with error-state threshold}
We solve \eqref{eq:SDP_triggering}, which returns $\mu=0.0037$ and $\sigma=0.0045$.
The resulting minimum inter-event time is $\tau(\sigma)=6.8352\cdot 10^{-5}$.
Fig.~\ref{fig:inverted_pen_error_state_ETM_Contraction} reports the evolutions of $x(\cdot)$, $|e(\cdot)|$, $\sigma |x(\cdot)|$ and $\{t_{k+1}-t_k\}_{ k\in \mathcal{N}}$. For this example, Theorem \ref{thm:exact_sampling} guarantees that the origin is globally asymptotically stable for the system \eqref{ex3} when the controller is digitally implemented, i.e., when $u(\cdot)$ is given by \eqref{controlDIG} and \eqref{eq:triggering}. 
Moreover, the minimum inter-event time is globally defined, i.e., for any solution $x(\cdot)$, the sequence of sampling instants satisfies $t_{k+1}-t_k\ge \tau(\sigma)$ for all $k\in \mathcal{N}$.

\subsubsection{The triggering policy with error-library threshold}

For this controller and $\eta=0.1$, we solve \eqref{eq:SDP_triggering2}, which returns 
$\mu=0.0037$ and $\sigma=0.0045$. The resulting minimum inter-event time is
$\tau(\sigma)=8.3868\cdot 10^{-5}$.
The evolutions of $x(\cdot)$, $|e(\cdot)|$, $\sigma |\zeta(x(\cdot))|$ and $\{t_{k+1}-t_k\}_{ k\in \mathcal{N}}$ are quite similar to those in Fig.~\ref{fig:inverted_pen_error_state_ETM_Contraction}, and are thus not reported; the minimum inter-event time in this simulation is $1.6952\cdot 10^{-4}$.
For this example, Proposition~\ref{prop:contractivity} returns $\Theta=\left[\begin{smallmatrix}0.8751&0.1445\\0.1445&0.0427
\end{smallmatrix}\right]$. 
Hence, by \eqref{eq:closed-loopV4ab}, $\frac{\partial V}{\partial x}(x) (X_1 G \zeta(x) +  B K e) \le -\frac{1}{2} x^\top \Theta x + \eta Q(x)^\top Q(x)=-\frac{1}{2}x^\top\Theta x  +0.1(\sin x_1) ^2\leq-0.3376x_1^2-0.0213x_2^2-0.1445 x_1x_2< 0$ holds for all $x\in\mathbb{R}^2\setminus\{0\}$. 
Also with this triggering policy, the minimum inter-event time is globally defined in this example.

\section{Concluding remarks}
\label{sec:conc}

We have presented an approach to design event-triggered control schemes for nonlinear systems directly from data. The approach follows a so-called emulation scheme, where the control law is first designed without considering sampling, and then the communication policy is designed. 
This scheme is flexible as it allows the control law to be designed according to any design strategy; we have considered a linearization method and a contractive design. 
We also proposed two different communication policies and discussed their respective properties.
None of these policies relied on an ISS assumption.

Designing event-triggered control schemes for nonlinear systems directly from data is a complex problem. We believe that the results presented in this work can provide a foundation for addressing even more complex problems, such as the case of output feedback control, and, more generally, for developing methods that automatically learn to control and communicate over a network.

\bibliographystyle{elsarticle-harv} 

\bibliography{refs}

\end{document}